\documentclass{article}
\usepackage[margin=1.5cm, includefoot, footskip=30pt]{geometry}
\usepackage{amsmath}
\usepackage{amsfonts}
\usepackage{amssymb}
\usepackage{authblk}
\usepackage[nocompress]{cite}
\usepackage{graphicx}
\usepackage{hyperref}
\usepackage{subcaption}
\usepackage{tikz}
\usetikzlibrary{calc}
\usetikzlibrary{shapes.geometric}
\usetikzlibrary{arrows.meta, positioning}
\usepackage{latexsym}

\newcommand{\gt}{\ensuremath >}
\usepackage{float}
\usepackage{framed}
\usepackage{mdframed}
\usetikzlibrary{positioning}
\usetikzlibrary{matrix}
\usepackage{amsthm}

\newtheorem{theorem}{Theorem}
\newtheorem{lemma}{Lemma}
\newtheorem{definition}{Definition}

\title{The Cooperation Ceiling: Extrinsic Population Dynamics and the Intrinsic Escape}
\author[1,*]{Harry Foster}
\author[1]{Vince Knight}
\author[2]{Sebastian Krapohl}
\affil[1]{School of Mathematics, Cardiff University}
\affil[2]{Faculty of Social and Behavioural Sciences, University of Amsterdam}
\affil[*]{Corresponding author: \texttt{fosterh3@cardiff.ac.uk}}
\date{\today}

\begin{document}
\maketitle

\begin{abstract}
    Evolutionary game theory provides a framework by which to study the
    emergence of cooperation in a population of self-interested actors. In such
    a framework, players' decisions on whether or not to cooperate evolve
    according to decision rules called population dynamics. However, often games
    are studied under the assumption that all individuals play under the same
    conditions, and many common choices of update rule are not well suited for a
    heterogeneous population. In this paper, we categorise and compare four
    different population dynamics in such a population as ``extrinsic'', where
    players learn by looking outward at the payoffs of other
    players, and ``intrinsic'', where players look inwardly at their own
    attributes or potential payoffs. We show that extrinsic population dynamics
    admit a ceiling on the rate of cooperation which can be exceeded by
    intrinsic population dynamics, and demonstrate this using the public goods
    game with heterogeneous contributions.
\end{abstract}

\section{Introduction}

The conditions under which cooperation can emerge among self-interested actors
is a central question across biology, economics, and the social sciences. Many
of the situations in which this question arises are collective action problems,
in which individuals share an interest in a common good yet each would prefer
that others bear the cost of providing it. The structure of such a dilemma maps
onto a \emph{public goods game}~\cite{ARCHETTI20129, hauert2002volunteering,
hauser2019social, HauertDynamics2004, 2bbcfd92-fcd5-38d1-b7a0-bc6a2aadbc64}:
each actor decides whether to incur a private cost to contribute to a shared
resource whose total value is multiplied by some factor \(r\) and distributed
equally amongst all players. This creates the temptation to free-ride, and so
purely selfish reasoning leads to a tragedy of the
commons~\cite{hardin1968tragedy, Weitz043299}, even when collective cooperation
would benefit everyone. Both theory~\cite{santos2011risk} and controlled
experiments~\cite{Milinski2011,milinski2008collective} show that framing
cooperation as a collective-risk problem, where failure to meet a cooperation
threshold incurs a shared loss, can substantially raise cooperation rates.

Evolutionary game theory offers a framework for studying how cooperation can
emerge among self-interested actors without assuming rational
foresight~\cite{hofbauer_sigmund_1998, 568a8a9c-d8ae-3d24-9fc2-3d33340230e7}. In
this approach, strategies that perform well spread through a population via one
of several update rules, while poorly performing strategies decline. The
conditions under which cooperative strategies can invade and stabilise have been
extensively studied in two-player settings such as the iterated prisoner's
dilemma~\cite{b5d8eed6-6ac7-35c7-953f-9af57df3f45e} and the snowdrift
game~\cite{hauert_spatial_2004}. The public goods game generalises these pairwise
dilemmas to \(N\) players~\cite{hauert2002volunteering}, making it well suited
to studying multi-lateral cooperation. In finite populations, evolutionary
outcomes are stochastic: a rare mutant strategy spreads or vanishes with a
probability that depends on the population size, payoff structure, and update
rule~\cite{Nowak2004Emergence, traulsen2006stochastic}. 

The conditions under which one strategy invades a population have been studied
in depth for particular evolutionary processes. The \(\frac{1}{N}\) fixation
probability of a single mutant under neutral
drift~\cite{568a8a9c-d8ae-3d24-9fc2-3d33340230e7} serves as a benchmark for when
selection favours a strategy, and results such as the \(\frac{1}{3}\)
rule~\cite{Nowak2004Emergence,OHTSUKI2007289} hold even when a population
updates with both the Moran process and Fermi imitation
dynamics~\cite{LIU2017336}. For symmetric \(2\times2\) games, the condition
\(a(N-2) + bN > cN + d(N-2)\)~\cite{Kandori, tarnita_strategy_2009} on the
payoff entries leads to the abundance of a given strategy in a wide range of
evolutionary processes and for arbitrary mutation and selection
intensity~\cite{ANTAL2009340}. These results,
however, concern a symmetric game played pairwise between identical members of a
population. We instead consider all \(N\)-player games in which a defector
outperforms a cooperator in a given state, amongst a fully heterogeneous
population.

A feature of real-world collective action that standard models typically set
aside is that actors are not identical. Players differ in wealth, productive
capacity, and the costs they face, and this inequality can reshape the dynamics
of cooperation~\cite{hauser2019social}. One commonly studied heterogeneous
population is one in a spatial setting, where players are placed on a network
and participate simultaneously in multiple games with their
neighbours~\cite{nowak1992evolutionary,
perc2013evolutionary,szabo2007evolutionary}. In~\cite{santos2008social} it is
shown that social diversity in endowments promotes the emergence of cooperation
in the public goods game: heterogeneous populations outperform their homogeneous
counterparts over a wide range of parameters. This echoes earlier work showing
that scale-free network topology, a form of structural heterogeneity, provides a
unifying framework for the emergence of cooperation~\cite{santos2005scale}. When
contributions depend on dynamically accumulated reputation~\cite{MA2021111353},
high-reputation defectors receive reduced transfers, weakening defection's
advantage. When heterogeneity arises from differences in network degree so that
players allocate contributions according to the difference in the number of
connections~\cite{LEI20104708}, the payoff structure either penalises
medium-degree defectors, or creates small clusters of strong cooperators. The
characteristic spatial mechanism is the formation of cooperating clusters that
are more profitable than their defecting neighbours; provided \(r\) is
sufficiently large, these clusters resist invasion and eventually spread.
Heterogeneity does not, however, unconditionally favour cooperation.
Flores~et~al.~\cite{PhysRevE.108.024111} show that when the contribution level
is itself a strategic variable rather than a fixed attribute, low-value
contributions act as a form of defection against the most collectively
beneficial strategies, so that the emergence of cooperation does not guarantee
the emergence of high-value cooperation.

How the choice of update rule interacts with population heterogeneity is also
poorly understood. The Moran process~\cite{knight2018evolution,Moran_1958},
Fermi imitation dynamics~\cite{PhysRevE.58.69}, aspiration
dynamics~\cite{du2014aspiration}, and introspection dynamics~\cite{Couto2022,
Couto2023} all model strategy revision in distinct ways, and each may respond
differently to player-specific payoffs. In particular, \emph{intrinsic}
dynamics, in which a player evaluates a prospective strategy change based on
their payoff in comparison with their own counterfactual payoff or aspiration
level, are better suited to heterogeneous populations than \emph{extrinsic}
dynamics, in which a player may copy a high-fitness neighbour without
considering whether the same strategy would serve them as well.

We make the following contributions. First, we build on an established general
framework for evolutionary dynamics on fully heterogeneous, well-mixed
populations of \(N\) ordered individuals, in which each player may have a
distinct payoff function and contribution level. Second, we formalise the notion
of a purely extrinsic population dynamic, a definition encompassing both the
Moran process~\cite{knight2018evolution,Moran_1958} and Fermi imitation
dynamics~\cite{PhysRevE.58.69}, and show that such dynamics are inherently
constrained by an upper bound on the probability of cooperation. Third, we
consider two purely intrinsic population dynamics, aspiration
dynamics~\cite{du2014aspiration} and introspection dynamics~\cite{Couto2022,
Couto2023}, showing that these processes can overcome the ceiling of purely
extrinsic dynamics. For each population dynamic, we use the public goods
game~\cite{ARCHETTI20129, hauert2002volunteering, HauertDynamics2004,
hauser2019social, 2bbcfd92-fcd5-38d1-b7a0-bc6a2aadbc64} to demonstrate our
results.

The remainder of the paper is organised as follows.
Section~\ref{sec:the_general_heterogeneous_population_dynamic_model} introduces
the population model and the public goods game.
Section~\ref{sec:extrinsic_population_dynamics} introduces a formal definition
of a \emph{purely extrinsic population dynamic}, gives formulations of the Moran
process and Fermi imitation dynamics in terms of our population model, proves
that these satisfy the definition of a purely extrinsic population dynamic, and
proves the ceiling on the probability of cooperation in such dynamics. We then
demonstrate this using a heterogeneous public goods game.
Section~\ref{sec:intrinsic_population_dynamics} gives formulations of two
intrinsic population dynamics and shows that they can exceed the ceiling on the
probability of cooperation, demonstrated using their results in the public goods
game. Finally, Section~\ref{sec:robustness_to_population_size} confirms through
direct simulation of the Markov chain that these results persist for large
populations.

\section{The general heterogeneous population dynamic model}
\label{sec:the_general_heterogeneous_population_dynamic_model}

As in~\cite{Couto2022, Couto2023}, we define a general framework for
evolutionary dynamics on a fully heterogeneous, well-mixed population.

\subsection{The state space}

We model a population consisting of \(N\) \textbf{ordered} individuals. Each
individual selects an action from a common action set \(\mathcal{A} = (\mathcal{A}_1, \mathcal{A}_2,
\ldots,\mathcal{A}_k)\) of size \(k\). A \emph{state} of the population is an ordered
\(N\)-tuple \(\mathbf{a} = (a_1, a_2, \ldots, a_N) \in \mathcal{A}^N\), and the
\emph{state space} \(S\) is the set of all such tuples. A payoff function \(\pi
\colon S \to \mathbb{R}^N\) assigns to each state a vector of individual
payoffs; we write \(\pi_i(\mathbf{a})\) for the payoff of individual \(i\) in
state \(\mathbf{a}\).

The state space of an evolutionary game can be described as the graph whose
nodes are the states \(\mathbf{a} \in S\), with an edge between two vertices
\(\mathbf{a}, \mathbf{b}\) if and only if they differ in exactly one position.
This defines an evolutionary process in which one player changes action type at
each time step. Each state also has a self-loop, as at a given time step, a
player may accept the same strategy they currently possess, or reject the
strategy which they have considered. We call the unique coordinate at which
\(\mathbf{a}\) and \(\mathbf{b}\) differ the \emph{index of difference}, denoted
\(I(\mathbf{a}, \mathbf{b})\). The set of states reachable from \(\mathbf{a}\)
in one step is denoted \(\mathrm{Neb}(\mathbf{a}) = \{\mathbf{b} \in S :
h(\mathbf{a}, \mathbf{b}) = 1\}\). This notation follows~\cite{Couto2023}.

In the case of a game with two actions, the state space of the Markov chain is
analogous to an \(N\)-dimensional hypercube. This is the graph whose node set
consists of the \(2^N\) \(N\)-dimensional boolean vectors, and which has edges
between vertices if and only if they differ in exactly one position
~\cite{HARARY1988277}. The state space of an evolutionary game with two actions
can be modelled in this way by denoting the strategies \(0\) and \(1\).
Throughout, the two-action games we study are social dilemmas, so we take these
two actions to be cooperate (\(C\)) and defect (\(D\)) and write the state space
as \(\{C, D\}^N\). The graphical representation of a birth-death process differs
from a hypercube in exactly two ways. First, a population may remain the same at
a given step, and thus each vertex contains a self-loop. Second, the probability
of transitioning from a state \(\mathbf{a}\) to a state \(\mathbf{b}\) is not
necessarily the same as the probability of transitioning from \(\mathbf{b}\) to
\(\mathbf{a}\). Thus, instead of one edge between two states, we have two
directed edges, \(\mathbf{a} \to \mathbf{b}\) and \(\mathbf{b} \to \mathbf{a}\).
Figure~\ref{fig:4-cube}(a) and~\ref{fig:4-cube}(b) illustrate the population
model and its state space.

\subsection{Population dynamics and mutation}

The population evolves as a Markov
chain~\cite{f66ea2a5-f1c3-3855-8dc0-e3047d83298e} on \(S\). We therefore obtain
a transition matrix \(T\), where \(T_{\mathbf{ab}}\) denotes the probability of
transitioning from a state \(\mathbf{a}\) to a state \(\mathbf{b}\), as shown in
figure~\ref{fig:4-cube}(c). They take the general form:

\begin{equation}
    T_{\mathbf{ab}} =
    \begin{cases}
        P(\mathbf{a} \to \mathbf{b})
        & \text{if } \mathbf{b} \in \text{Neb}(\mathbf{a}), \\
        0
        & \text{if } \mathbf{b} \notin \text{Neb}(\mathbf{a}) \text{ and } \mathbf{a} \neq \mathbf{b}, \\
        1 - \sum_{\mathbf{b} \in S \setminus \{\mathbf{a}\}} T_{\mathbf{ab}}
        & \text{if } \mathbf{a} = \mathbf{b}.
    \end{cases}
\end{equation}

where \(P(\mathbf{a} \to \mathbf{b})\) is the probability that player
\(I(\mathbf{a,b})\) changes from its action type in \(\mathbf{a}\) to its action
type in \(\mathbf{b}\). The process which determines this probability is called
a \emph{population dynamic}. We shall discuss four population dynamics in this
paper, and classify them as shown in Figure~\ref{fig:diagrams_of_dynamics}.
Purely extrinsic population dynamics only consider the success of a strategy in
the current population when updating a player's action type. This models a
behaviour where players look outwardly to attempt to improve their payoff.
Purely intrinsic population dynamics model a player who looks inwardly to
improve their payoff, comparing their current payoff to some value which does
not depend on the payoffs of an action type in the current population. The
dynamics discussed in this paper are shown in figure~\ref{fig:diagrams_of_dynamics}

\begin{figure}[!htbp]
    \centering
    \begin{mdframed}
    \includegraphics[width=\linewidth]{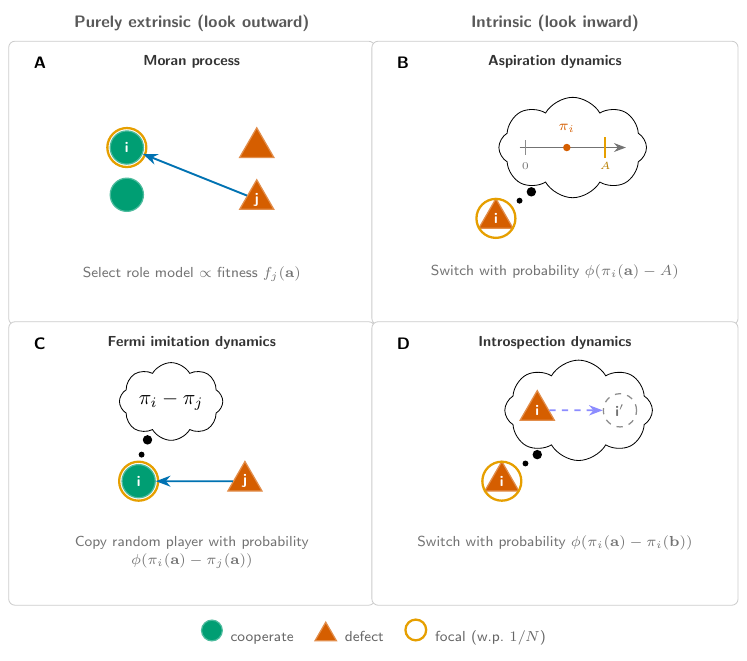}
    \end{mdframed}
    \caption{\textbf{The four population dynamics.} \textbf{A,} the Moran
    process and \textbf{C,} Fermi imitation dynamics are purely extrinsic.
    \textbf{B,} aspiration dynamics and \textbf{D,} introspection dynamics are
    purely intrinsic. For introspection dynamics the candidate action is chosen
    with probability \(\frac{1}{k-1}\), and for aspiration dynamics there is
    only one other action to accept.}
    \label{fig:diagrams_of_dynamics}
\end{figure}

Two of the dynamics discussed in this paper, the Moran process and Fermi
imitation dynamics, correspond to absorbing Markov chains, where the measurable
outcome is the probability of absorption into a given absorbing state. The
remaining two, introspection and aspiration dynamics, are ergodic chains, for
which we can compute the long-run abundance of a given strategy in the
population. To enable meaningful comparison across all models, we introduce
mutation. Mutation ensures that all processes become ergodic, while also
introducing a controlled amount of noise: at any moment individuals may switch
to a different strategy~\cite{FUDENBERG2006352}. Consequently, we no longer
obtain absorption matrices for any model with mutation; instead, we derive a
unique steady state that is independent of the initial state.

We assume a
mutation parameter \(\mu \in [0, 1]_{\mathbb{R}}^{N \times k}\), where
\(\mu_{ij}\) denotes the probability that individual \(i\) adopts strategy \(j\)
via mutation. The stationary abundance of a strategy \(C\), denoted \(p_C\), is
well defined only with mutation for the Moran process and Fermi imitation
dynamics, so we report these for \(\mu > 0\); the introspection and aspiration
dynamics are ergodic for every \(\mu \geq 0\). The figures throughout use the
mutation rates \(\mu \in \{0.001, 0.005, 0.05, 0.1\}\).

Given a process defined by transition matrix \(T\) and two states \(\mathbf{a}\)
and \(\mathbf{b}\), the probability of the original process \emph{with mutation}
transitioning from \(\mathbf{a}\) to \(\mathbf{b}\) is given by:

\[
    P^{(\mu)}(\mathbf{a} \to \mathbf{b})
    = P(\text{pick individual } I(\mathbf{a}, \mathbf{b})) \cdot \left(
        P(\mathbf{a}_{I(\mathbf{a,b})} \text{ transitions} \to \mathbf{b}_{I(\mathbf{a,b})}) \cdot
        P(\text{no mutation})
        + P(\mathbf{a}_{I(\mathbf{a}, \mathbf{b})}\text{ mutates} \to
        \mathbf{b}_{I(\mathbf{a}, \mathbf{b})})
    \right)
    \]

For a focal individual \(i\), the probability that no
mutation occurs is \(1 - \sum_{j=1}^k \mu_{ij}\). Let \(T\) be the transition
matrix for a population dynamic and \(T_{\mathbf{ab}}\) be the transition
probability from a state \(\mathbf{a}\) to a state \(\mathbf{b}\). Then the
transition matrix \(T^{(\mu)}\) for the process with mutation is given by:

\begin{equation}
    T^{(\mu)}_{\mathbf{ab}} =
    \begin{cases}
        T_{\mathbf{ab}} \left(1 - \sum_{j=1}^k \mu_{I(\mathbf{a}, \mathbf{b}), j}\right)
        + \frac{\mu_{I(\mathbf{a}, \mathbf{b}), \mathbf{b}_{I(\mathbf{a}, \mathbf{b})}}}{N}
        & \text{if } \mathbf{b} \in \text{Neb}(\mathbf{a}), \\
        0
        & \text{if } \mathbf{b} \notin \text{Neb}(\mathbf{a}) \text{ and } \mathbf{a} \neq \mathbf{b}, \\
        1 - \sum_{\mathbf{b} \in S \setminus \{\mathbf{a}\}} T_{\mathbf{ab}}
        & \text{if } \mathbf{a} = \mathbf{b}.
    \end{cases}
\end{equation}

This behaviour is shown in figure~\ref{fig:4-cube}(d). Note that if
\(\sum_{j=1}^k \mu_{ij} = 1\), mutation fully determines the behaviour of
individual \(i\). All processes in this paper are considered
under the effect of action-invariant mutation, where \(\mu_{iw} = \mu_{il}\) for
all action types \(w,l\). This ensures that mutation does not cause an inherent
bias towards a given action type.

\begin{figure}[!htbp]
    \begin{mdframed}
    \centering
    \includegraphics[width=\linewidth]{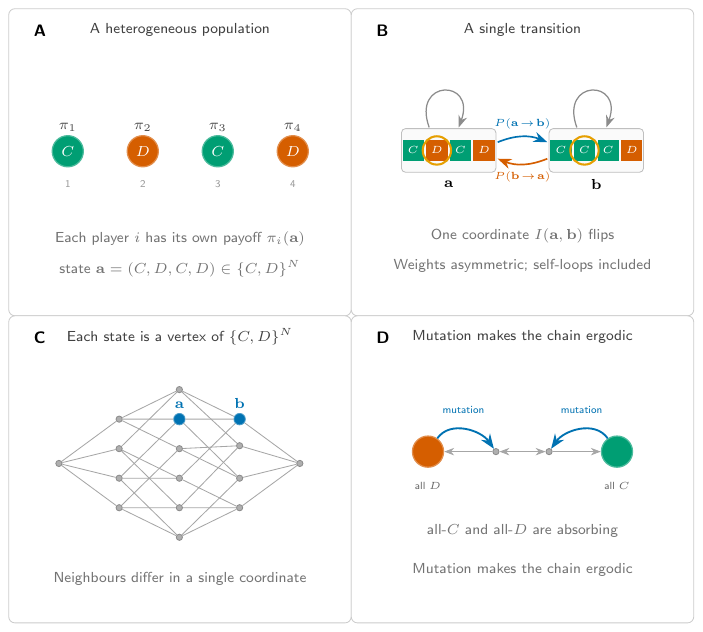}
    \end{mdframed}
    \caption{\textbf{The heterogeneous population as a Markov chain.}
    \textbf{A,} A heterogeneous population with action set \(\mathcal{A} = \{C, D\}\), in
    which each player has its own payoff function. \textbf{B,} A single
    transition between neighbouring states. \textbf{C,} Each state is a vertex
    of \(S = A^N\), so the state space is an \(N\)-dimensional hypercube.
    \textbf{D,} Mutation connects the chain and makes it ergodic.}
    \label{fig:4-cube}
\end{figure}

\section{Extrinsic population dynamics}
\label{sec:extrinsic_population_dynamics}
We begin by defining a \emph{purely extrinsic population dynamic}. Intuitively,
a purely extrinsic process revises strategies through payoff comparison between
members of a population.

\begin{definition}[Purely extrinsic population dynamic]
    Let \(T\) be the transition matrix of a population dynamic on a state space
    \(S\). We say that \(T\) defines a \emph{purely extrinsic} process if for
    every \(\mathbf{a,b} \in S\) with \(\mathbf{b} \in \text{Neb}(\mathbf{a})\),
    \(T_{\mathbf{ab}}\) satisfies:

    \begin{enumerate}
        \item \(T_{\mathbf{ab}}\) is a function of payoffs evaluated in
        \(\mathbf{a}\) and in no other state.
        \item Let \(\mathcal{R}\) denote the set of players \(j\) such that
        \(j \neq I(\mathbf{a,b})\) and \(\mathbf{a}_j = \mathbf{b}_{I(\mathbf{a,b})}\).
        Then \(T_{\mathbf{ab}}\) is non-decreasing in the payoffs
        \(\pi_j(\mathbf{a})\) for all \(j \in \mathcal{R}\) (and strictly
        increasing in at least one such \(j\) if \(\mathcal{R} \neq \emptyset\)),
        and non-increasing in the payoffs \(\pi_j(\mathbf{a})\) for all
        \(j \notin \mathcal{R}\).
        \item \(T_{\mathbf{ab}}\) contains no term dependent on \(a_{I(\mathbf{a,b})}\) or
        \(b_{I(\mathbf{a,b})}\), except in the payoff function
        \(\pi_{I(\mathbf{a,b})}(\mathbf{a})\), mutation parameters \(\mu_{ij}\),
        and in the definition of the set \(\mathcal{R}\).
    \end{enumerate}
    \label{def:purely_extrinsic}
\end{definition}

Condition (1) gives the property of purely extrinsic population dynamics that
looks outwardly, copying other individuals in the state. Condition (2)
captures the basic monotonicity: a higher payoff for a role model makes that
role model more likely to be copied, and a higher payoff for a non-role-model
pulls the other way. Condition (3) rules out an inherent bias towards either
action. Without (3), we must consider population dynamics which explicitly
favour a certain action type, for example:

\[
\bar{T}_{\mathbf{ab}} =
\begin{cases}
\delta_{I(\mathbf{a}, \mathbf{b})} \cdot T_{\mathbf{ab}} & \text{if } \mathbf{b} \neq \mathbf{a}, \\
1 - \sum_{\mathbf{c} \neq \mathbf{a}} T_{\mathbf{ac}} & \text{if } \mathbf{b} = \mathbf{a}.
\end{cases}
\]

where \(T_{\mathbf{ab}}\) is the transition matrix of some purely extrinsic
dynamic, and \(\delta_{I(\mathbf{a,b})}\) takes the value 1 if
\(b_{I(\mathbf{a,b})} = C\) and \(\frac{1}{2}\) if not. Without (3), \(\bar{T}\)
would define a purely extrinsic population dynamic, but would be tailored to
favour a particular action type. We exclude cases such as this, as they do not
conform to the framework in which fitness is the single measure of the success,
and thus the propagation, of action types in the population. While
process-specific parameters may affect the transition probability
\(T_{\mathbf{ab}}\), they should not inherently favour any specific action type.

We now show two classical examples of purely extrinsic population dynamics: the
Moran process, in which a player is chosen to be duplicated proportional to
their payoff, and Fermi imitation dynamics, which directly compares the payoffs
of pairs of players.

\subsection{The Moran process}

The Moran process~\cite{Moran_1958} is defined by the following algorithm:
\begin{enumerate}
    \item A player is chosen to be copied with a probability proportional to 
    their fitness in the population
    \item A player is chosen with probability \(\frac{1}{N}\) to change their
        strategy
\end{enumerate}

This is shown in Figure~\ref{fig:diagrams_of_dynamics}A, and the transition matrix is given by:

\begin{equation}
    T_{\mathbf{ab}} =
    \begin{cases}
        \dfrac{\sum_{j:\, a_j = b_{I(\mathbf{a}, \mathbf{b})}} f_j(\mathbf{a})}{N\sum_j f_j(\mathbf{a})}\left(1 - \sum_{j=1}^k \mu_{I(\mathbf{a}, \mathbf{b}), j}\right)+ \frac{\mu_{I(\mathbf{a}, \mathbf{b}), \mathbf{b}_{I(\mathbf{a}, \mathbf{b})}}}{N} & \text{if } \mathbf{b} \in \mathrm{Neb}(\mathbf{a}),\\
        0 & \text{if } \mathbf{b} \notin \mathrm{Neb}(\mathbf{a}) \text{ and } \mathbf{a} \neq \mathbf{b},\\
        1 - \sum_{\mathbf{b} \in S \setminus \{\mathbf{a}\}} T_{\mathbf{ab}} & \text{if } \mathbf{a} = \mathbf{b}.
    \end{cases}
\label{eqn:Moran_Probability_Function}
\end{equation}

Where \(f_i(\mathbf{a}) = 1 - \varepsilon + \varepsilon\pi_i(\mathbf{a})\) is
the fitness of a player \(i\) in state \(\mathbf{a}\) scaled with selection
intensity \(\varepsilon\in [0,1]\). The value of \(\varepsilon\) is chosen to ensure a
positive fitness for all players in order that the probability
\(T_{\mathbf{ab}}\) is always well defined. This parameter controls the
rationality with which players choose which strategy to adopt in each time step:
a larger \(\varepsilon\) indicates that a player will choose a higher fitness
strategy with a greater probability.

The Moran process has often been employed as a model of biological
evolution~\cite{568a8a9c-d8ae-3d24-9fc2-3d33340230e7}, where a player's ``type''
represents a species or allele within a population. In each time step, rather
than a given player choosing a new action type, a randomly chosen individual in
the population would die and be removed from the population. At the same time,
an individual would reproduce with a probability proportional to their fitness,
introducing an exact copy of themselves to the population to replace the removed
individual. As a result of the model's original purpose, the Moran process
contains no decision step where a player chooses whether or not to accept a
considered strategy. Instead, at each time step, a player \emph{must} accept
another player's action, and we remain in the same state when a player accepts
the same action as they currently play.

One implication of this is that in a heterogeneous population, worsening moves
may be taken with a high probability, as an action that provides a high payoff
for one player in a population may not provide the same return for another. In
nature, an offspring is unable to choose its parents, and thus cannot choose
which attributes it is born with, and so the Moran process does not model the
decision which players have regarding whether or not to play a certain strategy.
When modelling the adoption of behaviours, these assumptions may lead to players
making decisions which are harmful for their own payoff.

We now show that the Moran process satisfies the conditions of
Definition~\ref{def:purely_extrinsic}.

\begin{theorem}[The Moran process is a purely extrinsic process]
    The Moran process, as defined by
    Equation~\eqref{eqn:Moran_Probability_Function}, is a purely extrinsic
    population dynamic.
\end{theorem}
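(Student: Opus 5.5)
We check the three conditions of Definition~\ref{def:purely_extrinsic} directly against Equation~\eqref{eqn:Moran_Probability_Function}, for a fixed pair \(\mathbf{a},\mathbf{b}\) with \(\mathbf{b}\in\mathrm{Neb}(\mathbf{a})\). Write \(i = I(\mathbf{a},\mathbf{b})\), \(m = 1-\sum_{j}\mu_{ij}\), and
\[
F_{\mathcal{S}} = \sum_{j:\,a_j=b_i} f_j(\mathbf{a}), \qquad F = \sum_j f_j(\mathbf{a}),
\]
so that \(T_{\mathbf{ab}} = \frac{F_{\mathcal{S}}}{NF}\,m + \frac{\mu_{i,b_i}}{N}\). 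Since \(a_i\neq b_i\), the index set \(\{j : a_j=b_i\}\) never contains \(i\). It is therefore exactly the set \(\mathcal{R}\) of the definition, and \(F_{\mathcal{S}} = \sum_{j\in\mathcal{R}} f_j(\mathbf{a})\).

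\textbf{Condition (1).} Each \(f_j(\mathbf{a}) = 1-\varepsilon+\varepsilon\pi_j(\mathbf{a})\) depends only on payoffs in \(\mathbf{a}\). Apart from these, \(T_{\mathbf{ab}}\) involves only the constants \(N\), \(\varepsilon\) and \(\mu\). Hence \(T_{\mathbf{ab}}\) is a function of payoffs evaluated in \(\mathbf{a}\) alone.

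\textbf{Condition (2).} We differentiate \(T_{\mathbf{ab}}\) with respect to \(\pi_j(\mathbf{a})\), using \(\partial f_j/\partial\pi_j = \varepsilon \ge 0\).
\begin{itemize}
\item For \(j\in\mathcal{R}\), the quotient rule gives
\[
\frac{\partial T_{\mathbf{ab}}}{\partial \pi_j} = \frac{m\,\varepsilon\,(F - F_{\mathcal{S}})}{N F^2}.
\]
This is nonnegative because \(F - F_{\mathcal{S}} = \sum_{j\notin\mathcal{R}} f_j > 0\); the sum is nonempty since it includes \(i\), and fitnesses are positive by the choice of \(\varepsilon\).
\item For \(j\notin\mathcal{R}\), including \(j=i\),
\[
\frac{\partial T_{\mathbf{ab}}}{\partial \pi_j} = -\frac{m\,\varepsilon\, F_{\mathcal{S}}}{N F^2} \le 0.
\]
\end{itemize}
The main obstacle is the strictness clause. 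When \(\mathcal{R}\neq\emptyset\), strict increase requires \(\varepsilon>0\) and \(m>0\). At \(\varepsilon=0\) or full mutation the process is neutral, so strict monotonicity genuinely fails. I would handle this by stating that the theorem is meant for non-neutral selection, \(\varepsilon\in(0,1]\) and \(\sum_j\mu_{ij}<1\), and noting the degenerate cases explicitly, where only the weak inequalities hold. Under these assumptions, and with \(F - F_{\mathcal{S}} > 0\), the derivative is strictly positive for every \(j\in\mathcal{R}\).

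\textbf{Condition (3).} We check where \(a_i\) or \(b_i\) enters the formula.
\begin{itemize}
\item Through the summation index \(\{j : a_j=b_i\}\), which is \(\mathcal{R}\).
\item Through \(f_i(\mathbf{a})\) inside \(F\), which is the payoff \(\pi_i(\mathbf{a})\).
\item Through \(m\) and \(\mu_{i,b_i}\), which are mutation parameters.
\end{itemize}
No other term depends on the actions. In particular there is no action-specific weighting such as the \(\delta_{I(\mathbf{a},\mathbf{b})}\) in \(\bar T\). Since each occurrence falls into one of the exempted categories, condition (3) holds, and the Moran process is purely extrinsic.
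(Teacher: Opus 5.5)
Your proposal is correct and takes essentially the same approach as the paper: conditions (1) and (3) are read directly off Equation~\eqref{eqn:Moran_Probability_Function}, and condition (2) follows from the quotient-rule derivatives with respect to \(\pi_j(\mathbf{a})\), with strictness resting on exactly the paper's assumptions \(\varepsilon>0\), positive fitnesses, and \(\sum_j \mu_{ij}<1\). In your favour, your denominator \(NF^2\) is the correct one (the paper writes \(\bigl(N\sum_l f_l(\mathbf{a})\bigr)^2\), off by a factor of \(N\) but with no effect on any sign), and your observation that \(\{j : a_j = b_i\}=\mathcal{R}\) because \(a_i\neq b_i\) makes explicit a step the paper leaves tacit.
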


\begin{proof}
By definition the Moran process satisfies conditions (1) and (3). It remains to
be shown that \(T_{\mathbf{ab}}\) is non-decreasing in \(\pi_j(\mathbf{a})\) for \(j
\in \mathcal{R}\) (strictly increasing for at least one such \(j\)) and
non-increasing in \(\pi_j(\mathbf{a})\) for \(j \notin \mathcal{R}\). Recall
\(f_i(\mathbf{a}) = 1 - \varepsilon + \varepsilon \pi_i(\mathbf{a})\) for some selection intensity
\(\varepsilon > 0\). Then we have:

\[T_{\mathbf{ab}} = \frac{\sum_{j \in \mathcal{R}}
f_j(\mathbf{a})}{N\sum_{l=1}^N f_l(\mathbf{a})} \left(1 - \sum_{i=1}^k
\mu_{I(\mathbf{a,b}), i}\right) + \frac{\mu_{I(\mathbf{a,b}),
b_{I(\mathbf{a,b})}}}{N}\]
\[
\frac{dT_{\mathbf{ab}}}{d\pi_i(\mathbf{a})} = \frac{\varepsilon \sum_{x \notin \mathcal{R}}
f_x(\mathbf{a})}{\left(N\!\sum_{l=1}^N f_l(\mathbf{a})\right)^2}\left(1 - \sum_{i=1}^k \mu_{I(\mathbf{a,b}), i}\right)
\quad\text{for } i \in \mathcal{R}
\]

For \(i \in \mathcal{R}\), as \(\varepsilon\) is chosen such that \(f_i(\mathbf{a}) \gt
0\) for all \(i \in \mathbf{a}\), and \(\mu_{i,j}\) such that \(\sum_{i=1}^k
\mu_{I(\mathbf{a,b}), i} < 1\), this derivative is strictly positive. For \(j
\notin \mathcal{R}\), the analogous computation gives
\(\frac{dT_{\mathbf{ab}}}{d\pi_j(\mathbf{a})} = -\frac{\varepsilon \sum_{x \in
\mathcal{R}} f_x(\mathbf{a})}{(N\sum_{l=1}^N f_l(\mathbf{a}))^2}(1 -
\sum_{i=1}^k \mu_{I(\mathbf{a,b}), i}) \leq 0\), so condition (2) holds. 
\end{proof}

\subsection{Fermi imitation dynamics}

Fermi imitation dynamics~\cite{PhysRevE.58.69} is another commonly employed
method of modelling evolution in a population. It follows the process:

\begin{itemize}
  \item Two players, \(i\) and \(j\) are selected at random from the population.
  \item Player \(i\) copies the action type of the player \(j\) with a
  probability according to the Fermi function \( \phi(\pi_{i}(\mathbf{a}) -
  \pi_{j}(\mathbf{a})) = \frac{1}{1 + e^{\left({\beta(\pi_{i}(\mathbf{a}) -
  \pi_{j}(\mathbf{a}))}\right)}}\)
\end{itemize}

Where \(\beta\) is the \emph{choice} intensity of the system, representing how
sensitive our players' choice is to the difference between their current payoff
and the payoff which they are considering. This algorithm is also shown in
Figure~\ref{fig:diagrams_of_dynamics}C. This method has been commonly used in studies of
evolutionary games, for example~\cite{LEI20104708,MA2021111353,PhysRevE.58.69}.
We now define our transition matrix according to the Fermi function:

\begin{equation}
    T_{\mathbf{ab}} =
    \begin{cases}
        \dfrac{1}{N(N-1)}\!\sum_{j:\, a_j = b_{I(\mathbf{a}, \mathbf{b})}}\phi\!\left(\pi_{I(\mathbf{a}, \mathbf{b})}(\mathbf{a}) - \pi_{j}(\mathbf{a})\right)\left(1 - \sum_{j=1}^k \mu_{I(\mathbf{a}, \mathbf{b}), j}\right)+ \frac{\mu_{I(\mathbf{a}, \mathbf{b}), \mathbf{b}_{I(\mathbf{a}, \mathbf{b})}}}{N} & \text{if } \mathbf{b} \in \mathrm{Neb}(\mathbf{a}),\\
        0 & \text{if } \mathbf{b} \notin \mathrm{Neb}(\mathbf{a}) \text{ and } \mathbf{a} \neq \mathbf{b},\\
        1 - \sum_{\mathbf{b} \in S \setminus \{\mathbf{a}\}} T_{\mathbf{ab}} & \text{if } \mathbf{a} = \mathbf{b}.
    \end{cases}
\label{eqn:imitation_prob}
\end{equation}

Unlike a Moran Process, we can see that the players are not selected with a
probability proportional to their fitness. Instead, Fermi Imitation Dynamics selects
a pair of players and compares their payoffs directly, giving players the option
not to accept a strategy at a given time step. However, similarly to the Moran
process, in a heterogeneous population the method of adopting a new strategy
based entirely on the payoff of another individual may lead to worsening moves
in which players adopt strategies which currently admit a higher fitness, but
which will not improve their own fitness.

We also see that due to the formulation of the logit function \(\phi\), Fermi
imitation dynamics can become invariant under certain parameters. This occurs in
any game in which one action type's payoff in any given state can be expressed
as another action type's payoff up to the addition of some term \(\gamma\),
which depends only on a subset of the game's parameters. In that case,
\(\Delta(\pi) = \pm\gamma\), and is therefore invariant to any parameters of the
game of which \(\gamma\) is not a function.

We now show that Fermi imitation dynamics satisfies the definition of a purely
extrinsic population dynamic.

\begin{theorem}[Fermi imitation dynamics is a purely extrinsic population dynamic]
    Fermi imitation dynamics, as defined by
    Equation~\eqref{eqn:imitation_prob}, is a purely extrinsic population
    dynamic.
\end{theorem}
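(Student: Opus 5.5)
We follow the same route as the proof that the Moran process is purely extrinsic. We check conditions (1) and (3) of Definition~\ref{def:purely_extrinsic} by inspecting Equation~\eqref{eqn:imitation_prob}. We then settle condition (2) by differentiating \(T_{\mathbf{ab}}\) with respect to each payoff \(\pi_j(\mathbf{a})\).

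Write \(i^* = I(\mathbf{a},\mathbf{b})\) and \(m = 1 - \sum_{l=1}^k \mu_{i^*, l}\), and assume as before that \(m > 0\). With \(\mathcal{R} = \{j \neq i^* : a_j = b_{i^*}\}\), the transition probability is
\[
T_{\mathbf{ab}} = \frac{m}{N(N-1)} \sum_{j \in \mathcal{R}} \phi\bigl(\pi_{i^*}(\mathbf{a}) - \pi_j(\mathbf{a})\bigr) + \frac{\mu_{i^*, b_{i^*}}}{N}.
\]
The index \(j = i^*\) is excluded from the sum automatically, because \(a_{i^*} \neq b_{i^*}\).

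Condition (1) holds because every payoff appearing in this expression is evaluated in \(\mathbf{a}\). For condition (3), note that \(a_{i^*}\) and \(b_{i^*}\) enter only in three places: the index set \(\mathcal{R}\), the focal payoff \(\pi_{i^*}(\mathbf{a})\), and the mutation terms. These are exactly the permitted exceptions. Neither \(\phi\) nor the prefactor \(\frac{1}{N(N-1)}\) depends on the actions, and \(\beta\) is action-independent.

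For condition (2), we use \(\phi(x) = (1+e^{\beta x})^{-1}\), whose derivative is \(\phi'(x) = -\beta\,\phi(x)(1-\phi(x)) < 0\) for \(\beta > 0\). There are three kinds of player to consider.

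\begin{itemize}
\item For \(j \in \mathcal{R}\), the payoff \(\pi_j(\mathbf{a})\) appears only in its own summand, with a minus sign inside \(\phi\). Hence
\[
\frac{\partial T_{\mathbf{ab}}}{\partial \pi_j(\mathbf{a})} = \frac{m\beta}{N(N-1)}\,\phi(\cdot)\bigl(1-\phi(\cdot)\bigr) > 0,
\]
which is strictly positive, so the strictness requirement holds whenever \(\mathcal{R} \neq \emptyset\).
\item For \(j \notin \mathcal{R}\) with \(j \neq i^*\), the payoff \(\pi_j(\mathbf{a})\) does not appear at all, so the derivative is \(0\). This satisfies ``non-increasing.''
\item The focal player \(i^*\) lies outside \(\mathcal{R}\), and its payoff appears with a plus sign in every summand. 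This gives
\[
\frac{\partial T_{\mathbf{ab}}}{\partial \pi_{i^*}(\mathbf{a})} = -\frac{m\beta}{N(N-1)} \sum_{j\in\mathcal{R}} \phi(\cdot)\bigl(1-\phi(\cdot)\bigr) \le 0.
\]
\end{itemize}

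The main point needing care is the role of \(i^*\) in condition (2). Definition~\ref{def:purely_extrinsic} places \(i^*\) outside \(\mathcal{R}\), and its payoff genuinely affects \(T_{\mathbf{ab}}\), unlike in the Moran numerator. So we must confirm that its influence has the correct sign, which the third case shows. The remaining steps are routine sign checks that rely only on \(\beta > 0\), \(m > 0\), and \(\phi \in (0,1)\).
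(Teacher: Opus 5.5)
Your proof is correct and follows the paper's argument. Conditions (1) and (3) are checked by inspecting Equation~\eqref{eqn:imitation_prob}, and condition (2) uses the same three-case split as the paper: role models in \(\mathcal{R}\), non-role-models other than \(I(\mathbf{a},\mathbf{b})\) whose payoffs do not appear, and the focal player. Your explicit derivative \(\phi'(x) = -\beta\phi(x)(1-\phi(x))\) replaces the paper's appeal to \(\phi\) being strictly decreasing, and your stated assumptions \(\beta > 0\) and \(1 - \sum_{l} \mu_{I(\mathbf{a},\mathbf{b}),l} > 0\) are exactly what the strictness clause needs; the paper leaves them implicit, and its sweep even includes \(\beta = 0\), where strictness fails.
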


\begin{proof}
By definition, Fermi imitation dynamics satisfies conditions (1) and (3). It
remains to show that it satisfies condition (2).

\[ T_{\mathbf{ab}} = \frac{1}{N(N-1)}\sum_{j \in
\mathcal{R}}\phi(\pi_{I(\mathbf{a,b})}(\mathbf{a}) - \pi_j(\mathbf{a}))\left(1 -
\sum_{i=1}^k \mu_{I(\mathbf{a,b}), i}\right) + \frac{\mu_{I(\mathbf{a,b}),
b_{I(\mathbf{a,b})}}}{N} \]

As \(\phi\) is strictly decreasing, every term in the sum is strictly increasing
in \(\pi_j(\mathbf{a})\) for \(j \in \mathcal{R}\). For \(j \notin \mathcal{R}\) with
\(j \neq I(\mathbf{a,b})\), \(\pi_j(\mathbf{a})\) does not appear in the formula at
all. For \(j = I(\mathbf{a,b})\) (when \(a_{I(\mathbf{a,b})} \neq
b_{I(\mathbf{a,b})}\), so \(I \notin \mathcal{R}\)), each term is strictly
decreasing in \(\pi_{I(\mathbf{a,b})}(\mathbf{a})\) because \(\phi\) is. Thus, Fermi
imitation dynamics satisfies condition (2).
\end{proof}

\subsection{The cooperation ceiling}

We now consider purely extrinsic population dynamics in
games with two actions, \(C\) and \(D\). Figure~\ref{fig:ceiling_argument}
sketches the argument that follows: in a social dilemma a defector pointwise
dominates a cooperator, a relabelling of the two actions fixes a symmetric point
at neutral payoffs, and together these bound the cooperation probability at one
half. We first define a further classification for population dynamics:

\begin{figure}[!htbp]
    \centering
    \begin{mdframed}
    \includegraphics[width=\linewidth]{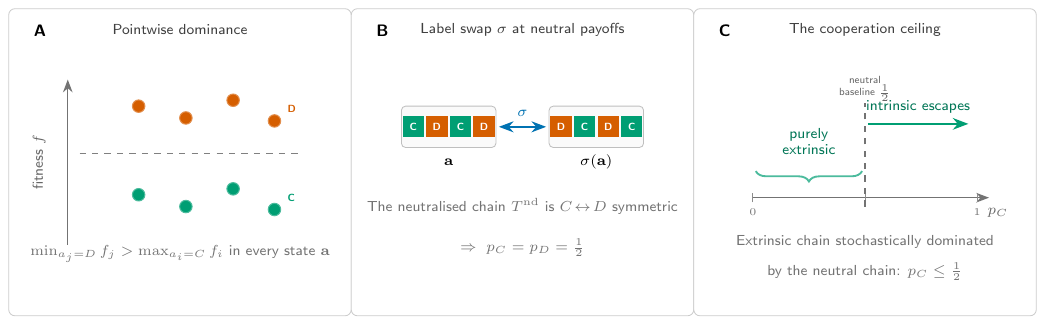}
    \end{mdframed}
    \caption{\textbf{Why purely extrinsic dynamics cannot exceed
    \(p_C = \frac{1}{2}\).} \textbf{A,} In a social dilemma a defector pointwise
    dominates a cooperator: in any state, switching to \(D\) does not lower the
    payoff. \textbf{B,} At neutral payoffs the dynamic is unchanged by the label
    swap \(\sigma\) that exchanges \(C\) and \(D\), fixing the symmetric point.
    \textbf{C,} Stochastic monotonicity then bounds the stationary cooperation
    probability at the neutral-drift baseline \(p_C = \frac{1}{2}\).}
    \label{fig:ceiling_argument}
\end{figure}

\begin{definition}[Neutrally monotone population dynamic]
    Let \(T\) be the transition matrix for a population dynamic on a game with
    two actions, \(\{C, D\}\), and equip the state space \(S =\{C, D\}^N\) with
    the componentwise partial ordering with \(D < C\). Let \(T^{\mathrm{nd}}\)
    denote \(T\) with all payoffs set to a common value (we denote by
    \(v_{\mathbf{a}}\) the common value in state \textbf{a}). We say that \(T\)
    defines a \emph{neutrally monotone} population dynamic if:
    
    \begin{enumerate}
    \item \(T_{\mathbf{ab}}^{nd}\) is independent of \(v_{\mathbf{a}}\).
    \item For every pair of states \(\mathbf{a} \leq \mathbf{a}'\), there exist random
        variables \(\mathbf{X}, \mathbf{X}'\) on a common probability space, with
        \(\mathbf{X} \sim T^{\mathrm{nd}}(\mathbf{a}, \cdot)\) and
        \(\mathbf{X}' \sim T^{\mathrm{nd}}(\mathbf{a}', \cdot)\), such that
        \(\mathbf{X} \leq \mathbf{X}'\) almost surely.

    \end{enumerate}
\end{definition}

The two conditions ask that the neutralised chain \(T^{\mathrm{nd}}\) is
independent of the common payoff value (condition (1)) and is stochastically
monotone (condition (2)). Neutral monotonicity thus captures configuration
neutrality: under the same neutral payoffs, starting from a more cooperative
profile cannot make the next state any less cooperative, and the chosen common
value does not impact this neutrality. In a process which is not neutrally
monotone, there could be a configuration-dependent
pull toward a given action type at neutral payoffs, which is an inherent bias
rather than truly neutral behaviour. It can be shown that this definition holds
for both the Moran process, and Fermi imitation dynamics:

\begin{theorem}
    The Moran process and Fermi imitation dynamics are neutrally monotone under action-invariant mutation.
\end{theorem}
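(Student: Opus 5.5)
Both conditions are checked by writing the neutralised transition matrices \(T^{\mathrm{nd}}\) explicitly and then building a monotone coupling.

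\textbf{Condition (1).} Set \(\pi_j(\mathbf{a}) = v_{\mathbf{a}}\) for all \(j\).

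For the Moran process every fitness equals \(f = 1-\varepsilon+\varepsilon v_{\mathbf{a}} > 0\). This common factor cancels in the ratio, so for \(\mathbf{b}\in\mathrm{Neb}(\mathbf{a})\) with \(i = I(\mathbf{a},\mathbf{b})\),
\[
T^{\mathrm{nd}}_{\mathbf{ab}} = \frac{|\{j : a_j = b_i\}|}{N^2}\,(1-2\mu_i) + \frac{\mu_i}{N},
\]
where action-invariance gives \(\mu_{iC}=\mu_{iD}=\mu_i\).

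For Fermi imitation every argument of \(\phi\) is \(0\), so \(\phi = \tfrac12\) and
\[
T^{\mathrm{nd}}_{\mathbf{ab}} = \frac{|\{j \neq i : a_j = b_i\}|}{2N(N-1)}\,(1-2\mu_i) + \frac{\mu_i}{N}.
\]
Neither expression involves \(v_{\mathbf{a}}\), which proves (1).

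\textbf{Condition (2).} In both neutral chains one step can be realised as follows. A focal individual \(i\) is chosen uniformly. With probability \(\mu_i\) it becomes \(C\), with probability \(\mu_i\) it becomes \(D\). Otherwise it adopts the action of a role model \(J\), drawn uniformly from all \(N\) players (Moran) or from the \(N-1\) others (Fermi). In the Fermi case the adoption happens only with an independent acceptance coin \(U\) of probability \(\tfrac12\), and \(i\) keeps \(a_i\) if the coin fails.

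I will first check that this realisation reproduces the displayed matrices. For Moran, \(i\) is chosen with probability \(1/N\), \(J\) with probability \(1/N\), and \(1-2\mu_i\) is the no-mutation probability, giving exactly the formula above. The Fermi case is analogous, with \(1/(N(N-1))\) and the factor \(\tfrac12\).

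Given \(\mathbf{a}\le\mathbf{a}'\), I couple the two chains by using the same \(i\), the same mutation outcome, the same \(J\), and the same \(U\) in both. All coordinates other than \(i\) are unchanged and so remain ordered. For coordinate \(i\) there are three cases.
\begin{itemize}
\item Under mutation, both copies take the same value.
\item Under copying, the new values are \(a_J \le a'_J\).
\item Under rejection (Fermi only), the values are \(a_i \le a'_i\).
\end{itemize}
Hence \(\mathbf{X}\le\mathbf{X}'\) almost surely. The marginals are correct because each copy is exactly the realisation just described.

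\textbf{Main obstacle.} The delicate point is the first step, verifying that the algorithmic realisation matches the stated transition matrices, including the self-loops.

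In the Moran matrix, the term \(\sum_{j: a_j=b_i}\) includes \(j=i\) only when \(a_i=b_i\). That case is the diagonal, which is defined as a remainder and does not affect the coupling. So I will confirm that the off-diagonal entries agree and let the diagonal entries follow automatically. The Fermi diagonal, including rejection mass, is handled the same way.

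The only other point needing care is that mutation probabilities are equal for \(C\) and \(D\). This is the action-invariance hypothesis, which guarantees that the shared mutation draw yields identical outcomes in both coupled copies.
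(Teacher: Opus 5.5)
Your proposal is correct and follows essentially the same route as the paper. For condition (1) you compute the neutralised entries explicitly, and the common fitness (Moran) or \(\phi(0)=\tfrac12\) (Fermi) removes all dependence on \(v_{\mathbf{a}}\). For condition (2) you couple the two chains by sharing the focal player, the role model, the acceptance coin and the mutation draw. Your check that this realisation reproduces the off-diagonal entries is a step the paper leaves implicit.

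One minor correction to your closing remark. The shared mutation draw gives identical outcomes in both copies for any mutation rates, because \(\mu_{ij}\) depends on the player and not on the state. Action-invariance is therefore used here only to write the no-mutation factor as \(1-2\mu_i\). It does real work later, in the \(C \leftrightarrow D\) symmetry step of the ceiling theorem.
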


\begin{proof}

    We begin with the Moran process. Set \(\pi_i(\mathbf{a}) = v\) for every
    player \(i\) gives \(f_i(\mathbf{a}) = 1 -\varepsilon + \varepsilon v\) for every \(i\),
    whence \(T_{\mathbf{ab}} = \frac{|\mathcal{R}|(1-\varepsilon+\varepsilon
    v)}{N\cdot N (1-\varepsilon+\varepsilon v)}(1 - \sum_i \mu_{I(\mathbf{a,b}),
    i}) + \frac{\mu_{I(\mathbf{a,b}), b_{I(\mathbf{a,b})}}}{N} =
    \frac{|\mathcal{R}|}{N^2}(1 - \sum_i \mu_{I(\mathbf{a,b}), i}) +
    \frac{\mu_{I(\mathbf{a,b}), b_{I(\mathbf{a,b})}}}{N}\), which is independent
    of \(v\). Therefore, condition (1) holds.

    Under \(T^{\mathrm{nd}}\), as defined above, the Moran process samples a focal player
    \(i\) uniformly, a role
    model \(k\) uniformly from the population (since all fitnesses are equal),
    and a mutation outcome \(m\) from the action-invariant mutation distribution;
    player \(i\) then adopts \(a_k\), or \(m\) if a mutation occurs. Couple the chains
    started at \(\mathbf{a} \leq \mathbf{a}'\) on the common probability space
    generated by \((i, k, m)\), applying the same draw to both. Coordinate \(i\) in
    the successor of \(\mathbf{a}\) is \(a_k\) (or \(m\)); coordinate \(i\) in the
    successor of \(\mathbf{a}'\) is \(a'_k\) (or the same \(m\)). Since
    \(\mathbf{a} \leq \mathbf{a}'\) gives \(a_k \leq a'_k\), and every other
    coordinate \(j\) is left unchanged with \(a_j \leq a'_j\), the two successors
    \(\mathbf{X}\) and \(\mathbf{X}'\) satisfy \(\mathbf{X} \leq \mathbf{X}'\)
    pointwise on this probability space, so condition (2) holds.

    For Fermi imitation dynamics, if \(\pi_i(\mathbf{a}) = v\) for every player
    \(i\), then every payoff difference appearing in the Fermi formula is zero, so
    \(\phi(\pi_I(\mathbf{a}) - \pi_j(\mathbf{a})) = \phi(0) = \frac{1}{2}\) in
    each summand, and \(T_{\mathbf{ab}} = \frac{|\mathcal{R}|}{2N(N-1)}(1 -
    \sum_i \mu_{I(\mathbf{a,b}), i}) + \frac{\mu_{I(\mathbf{a,b}),
    b_{I(\mathbf{a,b})}}}{N}\), independent of \(v\). Therefore, condition (1)
    holds.

    Under \(T^{\mathrm{nd}}\), the dynamic samples a focal player \(i\) uniformly, a
    role model \(k\) uniformly from the remaining \(N - 1\) players, an independent
    Bernoulli outcome with success probability \(\phi(0) = \frac{1}{2}\), and a
    mutation outcome \(m\) from the action-invariant mutation distribution; player
    \(i\) adopts \(a_k\) on a success (or \(m\) if a mutation occurs) and otherwise
    keeps \(a_i\). Couple the chains started at \(\mathbf{a} \leq \mathbf{a}'\) on
    the common probability space generated by \((i, k, \text{Bernoulli}, m)\),
    applying the same draw to both. On a mutation step, both successors set
    coordinate \(i\) to the same \(m\). On a successful imitation step, they set
    coordinate \(i\) to \(a_k\) and \(a'_k\) respectively, with \(a_k \leq a'_k\). On a
    no-update step, coordinate \(i\) retains \(a_i \leq a'_i\). Every other
    coordinate is unchanged, so the two successors \(\mathbf{X}\) and
    \(\mathbf{X}'\) satisfy \(\mathbf{X} \leq \mathbf{X}'\) pointwise on this
    probability space, and condition (2) holds.
\end{proof}

We will now show that there is a theoretic limit on the abundance of cooperators
in any neutrally monotone purely extrinsic population dynamic. Before stating the
theorem we record a comparison result for Markov chains on partially ordered
spaces. The result is classical and we claim no novelty; we include a short
proof for the reader's convenience, since its use in the proof of the theorem
below may not be familiar to the reader.

\begin{lemma}\label{lem:stationary-dominance}
    Let \(P\) and \(Q\) be irreducible, aperiodic transition matrices on a
    finite partially ordered set \((S, \leq)\), with stationary distributions
    \(\rho^P\) and \(\rho^Q\). Suppose that
    \begin{enumerate}
        \item[(i)] \(P(\mathbf{a}, \cdot) \preceq_{st} Q(\mathbf{a}, \cdot)\) for
        every \(\mathbf{a} \in S\); and
        \item[(ii)] \(Q\) is stochastically monotone: \(\mathbf{a} \leq
        \mathbf{a}'\) implies \(Q(\mathbf{a}, \cdot) \preceq_{st} Q(\mathbf{a}',
        \cdot)\).
    \end{enumerate}
    Then \(\rho^P \preceq_{st} \rho^Q\).
\end{lemma}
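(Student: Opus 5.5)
Throughout, I read \(\preceq_{st}\) as the usual stochastic order on the poset: \(\nu \preceq_{st} \nu'\) means \(\sum_{\mathbf{x}} f(\mathbf{x})\nu(\mathbf{x}) \leq \sum_{\mathbf{x}} f(\mathbf{x})\nu'(\mathbf{x})\) for every non-decreasing \(f\colon S \to \mathbb{R}\). Equivalently, \(\nu(U) \leq \nu'(U)\) for every up-set \(U \subseteq S\).

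The approach is to iterate both chains from the same initial state, show by induction that the time-\(n\) distributions stay ordered, and pass to the limit using ergodicity. Fix any \(\mathbf{a}_0 \in S\) and write \(\nu_n = \delta_{\mathbf{a}_0}P^n\) and \(\eta_n = \delta_{\mathbf{a}_0}Q^n\). The claim to prove by induction is \(\nu_n \preceq_{st} \eta_n\) for all \(n \geq 0\). The base case \(n=0\) is trivial, since both distributions equal \(\delta_{\mathbf{a}_0}\).

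For the inductive step, take a non-decreasing \(f\) and split the comparison in two:
\[
\nu_{n+1} f = \sum_{\mathbf{a}} \nu_n(\mathbf{a}) (Pf)(\mathbf{a}) \leq \sum_{\mathbf{a}} \nu_n(\mathbf{a}) (Qf)(\mathbf{a}) \leq \sum_{\mathbf{a}} \eta_n(\mathbf{a}) (Qf)(\mathbf{a}) = \eta_{n+1} f .
\]
The first inequality is hypothesis (i) applied row by row, which gives \((Pf)(\mathbf{a}) \leq (Qf)(\mathbf{a})\). The second inequality is where hypothesis (ii) is used. Stochastic monotonicity of \(Q\) means exactly that \(Qf\) is non-decreasing whenever \(f\) is, so the induction hypothesis \(\nu_n \preceq_{st} \eta_n\) applies to the function \(Qf\).

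I expect this second inequality to be the only delicate point. The ordering has to be transported through \(Q\), not through \(P\), because only \(Q\) is assumed monotone. That is why the intermediate term uses \(\nu_n\) paired with \(Qf\), rather than \(\eta_n\) paired with \(Pf\).

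Finally, irreducibility, aperiodicity and finiteness of \(S\) give \(\nu_n \to \rho^P\) and \(\eta_n \to \rho^Q\) pointwise. For each fixed \(f\), \(\nu_n f\) and \(\eta_n f\) are finite sums, so the inequality \(\nu_n f \leq \eta_n f\) passes to the limit and gives \(\rho^P f \leq \rho^Q f\). Hence \(\rho^P \preceq_{st} \rho^Q\).

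If a coupling formulation is preferred, Strassen's theorem on finite posets converts (i) and (ii) into coupled random variables, and the same argument can be written pathwise. The functional version above avoids that machinery.
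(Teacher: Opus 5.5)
Your proof is correct, and it has the same skeleton as the paper's. Both start the two chains at a common state and propagate the order one step at a time through \(P(\mathbf{x},\cdot) \preceq_{st} Q(\mathbf{x},\cdot) \preceq_{st} Q(\mathbf{y},\cdot)\) for \(\mathbf{x} \leq \mathbf{y}\), using exactly the intermediate term you single out. Both then pass to the limit by ergodicity.

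The difference is where the order is propagated.

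\begin{itemize}
\item The paper works pathwise. At each step it invokes the Kamae--Krengel--O'Brien (Strassen-type) theorem to realise \(P(X^P_t,\cdot) \preceq_{st} Q(X^Q_t,\cdot)\) as an almost surely ordered pair. This builds coupled trajectories with \(X^P_t \leq X^Q_t\) for all \(t\).
\item You work on the dual side. You test against non-decreasing \(f\) and use only linearity, together with the observation that (ii) says precisely that \(Qf\) is non-decreasing. So no existence theorem for monotone couplings is needed.
\end{itemize}

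The coupling route gives a stronger, trajectory-level statement: the coupled paths, and hence their time averages, stay ordered. The price is citing a nontrivial theorem. Your route is shorter and self-contained, and it delivers exactly what the Extrinsic Ceiling theorem consumes: an inequality for the expectation of the monotone functional \(n_C\).

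Your version also fits the theorem's Step 3 more naturally. There, the coupling from neutral-monotonicity condition (2) is already converted into the up-set inequality. Your lemma uses that form directly and avoids the round trip back to a coupling. The up-set and monotone-function formulations agree on a finite poset, since any non-decreasing \(f\) is a constant plus a non-negative combination of indicators of its upper level sets.

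Finally, because your argument is linear in the distributions, you could replace \(\nu_n,\eta_n\) by their Ces\`aro averages and dispense with aperiodicity.
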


\begin{proof}
    Kamae, Krengel, and O'Brien~\cite{kamae1977stochastic} prove that
    whenever two probability measures \(\nu_1, \nu_2\) on a partially ordered
    Polish space satisfy \(\nu_1 \preceq_{st} \nu_2\), there exist random
    variables \(X_1 \sim \nu_1\) and \(X_2 \sim \nu_2\) defined on a common
    probability space with \(X_1 \leq X_2\) almost surely (a \emph{monotone
    coupling}).

    Using this, we construct a coupling of the two chains by induction on \(t\).
    Start with any \(X^P_0, X^Q_0 \in S\) satisfying \(X^P_0 \leq X^Q_0\). Given
    \(X^P_t \leq X^Q_t\), hypothesis (ii) gives \(Q(X^P_t, \cdot) \preceq_{st}
    Q(X^Q_t, \cdot)\), which combined with (i) at \(\mathbf{a} = X^P_t\) yields
    \(P(X^P_t, \cdot) \preceq_{st} Q(X^Q_t, \cdot)\). Applying
    Kamae--Krengel--O'Brien to this pair of measures produces \((X^P_{t+1},
    X^Q_{t+1})\) with the required marginals and \(X^P_{t+1} \leq X^Q_{t+1}\)
    almost surely. By induction, \(X^P_t \leq X^Q_t\) for every \(t\). Since both
    chains are ergodic, their marginal laws converge to \(\rho^P\) and \(\rho^Q\), and
    passing to the limit gives \(\rho^P \preceq_{st} \rho^Q\).
\end{proof}

We now state the ceiling on the average abundance of cooperators in purely
extrinsic population dynamics. Our result is comparable to that
of~\cite{ANTAL2009340}, who studied the symmetric \(2\times2\) game with payoff
matrix
\begin{equation}
\begin{array}{c|cc}
    & A & B \\ \hline
A & a & b \\
B & c & d
\end{array}
\label{eq:abcd}
\end{equation}
and showed that, for a wide range of evolutionary processes with mutation, the
condition \(a(N-2) + bN > cN + d(N-2)\) makes strategy \(A\) more abundant than
\(B\) at any mutation rate. This builds upon the work of~\cite{Kandori,
Nowak2004Emergence}, generalising results to arbitrary mutation rates and
selection intensity. The impact of mutation on the abundance of a given strategy
in such a game is discussed by Fudenberg~et~al~\cite{FUDENBERG2006352}. We now
show that for neutrally monotone purely extrinsic population dynamics, pointwise
domination of defection over cooperation is sufficient for defection to dominate
cooperation in a heterogeneous population.

\begin{framed}
\begin{theorem}[Extrinsic Ceiling]

    Let \(T\) define a neutrally monotone purely extrinsic process on an
    \(N\)-player game with 2 actions, \(C\) and \(D\), such that in any given state
    \(\mathbf{a}\),
    \(\min_{j: a_j = D} \pi_j(\mathbf{a}) > \max_{i: a_i = C} \pi_i(\mathbf{a})\)
    (that is, any defector has greater payoff than any cooperator; equivalently,
    since fitness \(f_i = 1 - \varepsilon + \varepsilon \pi_i\) is increasing in payoff, any
    defector has greater fitness than any cooperator), under action-invariant
    mutation.

    Then the mean proportion of cooperators in the steady state of \(T\), denoted \(p_C\),
     satisfies \(p_C \leq \frac{1}{2}\).
    \label{prop:ceiling}
\end{theorem}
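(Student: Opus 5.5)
The plan is to apply Lemma~\ref{lem:stationary-dominance} with \(P = T\) and \(Q = T^{\mathrm{nd}}\), with the order on \(S=\{C,D\}^N\) reversed, and then to evaluate the neutral chain exactly by symmetry. Write \(\sigma\) for the label swap exchanging \(C\) and \(D\) in every coordinate.

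\textbf{Step 1: \(T^{\mathrm{nd}}\) has \(p_C = \frac{1}{2}\).} By condition (1) of neutral monotonicity, \(T^{\mathrm{nd}}_{\mathbf{ab}}\) does not depend on the common value \(v_{\mathbf{a}}\). By condition (3) of Definition~\ref{def:purely_extrinsic}, together with action-invariant mutation, \(T^{\mathrm{nd}}_{\mathbf{ab}}\) depends on the labels \(a_I, b_I\) only through the role-model set \(\mathcal{R}\). Relabelling maps \(\mathcal{R}(\mathbf{a},\mathbf{b})\) to \(\mathcal{R}(\sigma\mathbf{a},\sigma\mathbf{b})\), so
\[
T^{\mathrm{nd}}_{\sigma\mathbf{a}\,\sigma\mathbf{b}} = T^{\mathrm{nd}}_{\mathbf{ab}}.
\]
Mutation makes the chain irreducible and aperiodic, so its stationary distribution \(\rho^{\mathrm{nd}}\) is unique and therefore \(\sigma\)-invariant. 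Since \(\sigma\) sends the cooperator fraction \(x\) to \(1-x\), the mean cooperator proportion under \(\rho^{\mathrm{nd}}\) is exactly \(\frac{1}{2}\).

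\textbf{Step 2: one-step domination \(T(\mathbf{a},\cdot) \preceq_{st} T^{\mathrm{nd}}(\mathbf{a},\cdot)\) in the order \(D<C\).} Every successor of \(\mathbf{a}\) is \(\mathbf{a}\) itself or a single-coordinate flip. It therefore suffices to show two inequalities for each \(i\):
\[
T_{\mathbf{a}\to\mathbf{a}^{i:D\to C}} \le T^{\mathrm{nd}}_{\mathbf{a}\to\mathbf{a}^{i:D\to C}}, \qquad T_{\mathbf{a}\to\mathbf{a}^{i:C\to D}} \ge T^{\mathrm{nd}}_{\mathbf{a}\to\mathbf{a}^{i:C\to D}}.
\]
Each coordinate moves independently of the others in a single step, so a coordinatewise coupling then realises the stochastic order. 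The inequalities come from condition (2) by moving payoffs continuously. Start from the actual payoffs and lower every cooperator and raise every defector toward a common value \(v\) chosen with \(\max_C \pi < v < \min_D \pi\); this \(v\) exists by the hypothesis.

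For a \(D\to C\) flip of player \(i\), \(\mathcal{R}\) is the set of cooperators and \(i\notin\mathcal{R}\). Raising the cooperators' payoffs to \(v\) can only increase \(T\), and lowering the defectors' payoffs to \(v\) (including \(\pi_i\)) can only increase \(T\). So the actual probability is at most the neutral one. The \(C\to D\) case is symmetric: \(\mathcal{R}\) is the set of defectors, and the same moves give the reverse inequality. Condition (1) of the neutral-monotone definition guarantees that the endpoint does not depend on the chosen \(v\).

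\textbf{Step 3: apply the lemma and read off the mean.} Condition (2) of neutral monotonicity gives stochastic monotonicity of \(T^{\mathrm{nd}}\). Lemma~\ref{lem:stationary-dominance} then yields \(\rho^T \preceq_{st} \rho^{\mathrm{nd}}\). The number of cooperators is an increasing function on the ordered state space, so its expectation is at most its expectation under \(\rho^{\mathrm{nd}}\), namely \(N/2\). Dividing by \(N\) gives \(p_C \le \frac{1}{2}\).

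\textbf{Main obstacle.} The delicate step is Step~2. Condition (2) of Definition~\ref{def:purely_extrinsic} is stated as monotonicity in each payoff separately, and the argument needs the path from actual to neutral payoffs to remain in the domain where \(T\) is defined, for example with positive fitness in the Moran process. I would take a straight-line path and check well-definedness along it. A second point to confirm is that the hypothesis is needed only state by state, since the comparison is made separately in each \(\mathbf{a}\).
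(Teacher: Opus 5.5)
Your proposal is correct and is essentially the paper's proof with the steps reordered. It has the same three ingredients: one-step domination \(T(\mathbf{a},\cdot)\preceq_{st}T^{\mathrm{nd}}(\mathbf{a},\cdot)\) from purely-extrinsic condition (2) by moving all payoffs in \(\mathbf{a}\) to a common value; \(\sigma\)-invariance of \(T^{\mathrm{nd}}\) giving \(p_C=\frac{1}{2}\); and Lemma~\ref{lem:stationary-dominance} applied with the stochastic monotonicity from neutral-monotonicity condition (2). The only differences are that you take \(v\) strictly between \(\max_{i: a_i = C}\pi_i(\mathbf{a})\) and \(\min_{j: a_j = D}\pi_j(\mathbf{a})\), where the paper uses \(v(\mathbf{a})=\min_{j: a_j=D}\pi_j(\mathbf{a})\), and that you explicitly reduce the one-step domination to per-flip inequalities, which the paper simply asserts.

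A few minor slips to tidy. The phrase ``lower every cooperator and raise every defector'' has the directions reversed, though your next paragraph gets them right. No reversal of the order is needed, since you work with \(D<C\) throughout. The per-flip reduction holds not because coordinates move independently (at most one moves per step) but because the trace of any upper set on the support of \(T(\mathbf{a},\cdot)\) is either a set of \(D\to C\) flips or consists of \(\mathbf{a}\), all \(D\to C\) flips and some \(C\to D\) flips.
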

\end{framed}
\begin{proof}

    Equip \(S = \{C, D\}^N\) with the componentwise partial order, declaring \(D <
    C\). For a transition matrix \(P\), write \(P(\mathbf{a}, \cdot) \preceq_{st}
    P(\mathbf{a}', \cdot)\) to mean that the one-step distribution from
    \(\mathbf{a}\) under \(P\) is stochastically dominated by the one-step
    distribution from \(\mathbf{a}'\) under \(P\) in this order. Let
    \(T^{\mathrm{nd}}_{\mathbf{ab}}\) denote the value of \(T_{\mathbf{ab}}\)
    evaluated at \(\pi_i(\mathbf{a}) = 0\) for every \(i\); by neutral-monotonicity
    condition (1), \(T^{\mathrm{nd}}_{\mathbf{ab}}\) is equally the value of \(T_{\mathbf{ab}}\)
    under any other choice of common payoff.

    \medskip
    \noindent\textit{Step 1: \(T(\mathbf{a}, \cdot) \preceq_{st}
    T^{\mathrm{nd}}(\mathbf{a}, \cdot)\) for every \(\mathbf{a} \in S\).}

    Set \(v(\mathbf{a}) := \min_{a_j = D} \pi_j(\mathbf{a})\). In every mixed
    state \(\mathbf{a}\), the payoff-dominance hypothesis implies
    \(\pi_i(\mathbf{a}) < v(\mathbf{a})\) for every cooperator \(i\) and
    \(\pi_j(\mathbf{a}) \geq v(\mathbf{a})\) for every defector \(j\). Consider a
    \(C\)-gaining transition \(\mathbf{a} \to \mathbf{b}\), so \(b_{I(\mathbf{a,b})}
    = C\) and \(\mathcal{R}\) is the set of current cooperators. Replacing every
    \(\pi_i(\mathbf{a})\) with \(v(\mathbf{a})\) raises all role-model payoffs and
    lowers all non-role-model payoffs, so by purely-extrinsic condition (2),

    \[
        T_{\mathbf{ab}} \;\leq\; T_{\mathbf{ab}} \big|_{\pi_i(\mathbf{a}) = v(\mathbf{a})\text{ for all } i}
    \]

    with strict inequality whenever \(\mathcal{R} \neq \emptyset\). By
    neutral-monotonicity condition (1), the right-hand side equals
    \(T^{\mathrm{nd}}_{\mathbf{ab}}\). The
    analogous argument with reversed inequality applies to \(D\)-gaining
    transitions. Hence \(T(\mathbf{a}, \cdot) \preceq_{st}
    T^{\mathrm{nd}}(\mathbf{a}, \cdot)\).

    \medskip
    \noindent\textit{Step 2: \(T^{\mathrm{nd}}\) is invariant under the \(C \leftrightarrow D\) label swap.}

    Let \(\sigma : S \to S\) be the map that swaps every \(C\) and \(D\). By
    purely-extrinsic condition (3) and neutral-monotonicity condition (1), with
    common payoff \(0\), the only features of
    \((\mathbf{a}, \mathbf{b})\) that enter \(T^{\mathrm{nd}}_{\mathbf{ab}}\) are
    the mutation matrix and the set \(\mathcal{R}\). Under action-invariant
    mutation the mutation matrix is \(\sigma\)-invariant; and \(|\mathcal{R}|\) is
    \(\sigma\)-invariant because the set of players currently playing
    \(b_{I(\mathbf{a,b})}\) in \(\mathbf{a}\) has the same size as the set of
    players currently playing \(\sigma(b_{I(\mathbf{a,b})})\) in
    \(\sigma(\mathbf{a})\). Hence \(T^{\mathrm{nd}}_{\mathbf{ab}} =
    T^{\mathrm{nd}}_{\sigma(\mathbf{a})\sigma(\mathbf{b})}\) for every
    transition, so the stationary distribution of \(T^{\mathrm{nd}}\) is
    \(\sigma\)-invariant, giving \(p_C^{(T^{\mathrm{nd}})} =
    p_D^{(T^{\mathrm{nd}})} = \frac{1}{2}\).

    \medskip
    \noindent\textit{Step 3: \(T^{\mathrm{nd}}\) is stochastically monotone.}

    Neutral-monotonicity condition~(2) provides, for every \(\mathbf{a} \leq \mathbf{a}'\), a coupling
    \((\mathbf{X}, \mathbf{X}')\) on a common probability space with \(\mathbf{X}
    \sim T^{\mathrm{nd}}(\mathbf{a}, \cdot)\), \(\mathbf{X}' \sim
    T^{\mathrm{nd}}(\mathbf{a}', \cdot)\), and \(\mathbf{X} \leq \mathbf{X}'\)
    almost surely. For any upper set \(U \subseteq S\), \(\{\mathbf{X} \in U\}
    \subseteq \{\mathbf{X}' \in U\}\), so \(T^{\mathrm{nd}}(\mathbf{a}, U) \leq
    T^{\mathrm{nd}}(\mathbf{a}', U)\), giving \(T^{\mathrm{nd}}(\mathbf{a}, \cdot)
    \preceq_{st} T^{\mathrm{nd}}(\mathbf{a}', \cdot)\).

    \medskip
    Applying Lemma~\ref{lem:stationary-dominance} to \(P = T\) and \(Q =
    T^{\mathrm{nd}}\) (which satisfy (i) by Step 1 and (ii) by Step 3) yields
    \(\rho^T \preceq_{st} \rho^{T^{\mathrm{nd}}}\). Since \(n_C\) is monotone in the
    componentwise order,
    \[
        p_C^{(T)} \;=\; \frac{1}{N}\, \mathbb{E}_T[n_C]
        \;\leq\; \frac{1}{N}\, \mathbb{E}_{T^{\mathrm{nd}}}[n_C]
        \;=\; p_C^{(T^{\mathrm{nd}})} \;=\; \frac{1}{2}.
    \]
\end{proof}

Crucially, Theorem~\ref{prop:ceiling} shows that even if a player may increase
their payoff by a switch \(D \to C\) in any given state, pointwise dominance of
\(D\) will still cause the process to favour \(D\). This shows that a player's
chosen strategy does not necessarily favour the action type which maximises
their payoff in \emph{either} the one-step or the long-run strategy
distribution, instead favouring the strategy which is pointwise dominant over
its peers. Even if there is no temptation to free-ride, players will still
choose to defect based on the pointwise dominance of defectors in the state.

An example of this is the public goods game with \(r > N\). This satisfies both
pointwise dominance and a fitness increase for any player under the \(D \to C\)
swap. In such a game, there is no temptation to free ride; players will obtain a
greater payoff by cooperating regardless of the population structure. Threshold
values of \(r\) in a homogeneous public goods game are considered
in~\cite{g10010001}, where a subset of the population play a game at each time
step and players opt out of the game with a certain probability. The following
section considers the heterogeneous public goods game, providing an example of
Theorem~\ref{prop:ceiling}.

\subsubsection{Purely extrinsic population dynamics in the public goods game}

To illustrate the results of section~\ref{sec:extrinsic_population_dynamics}, we
carry out a large parameter sweep across parameters of a public goods
game~\cite{ARCHETTI20129, hauert2002volunteering, HauertDynamics2004,
hauser2019social, 2bbcfd92-fcd5-38d1-b7a0-bc6a2aadbc64}. In such a game, players
choose to contribute some amount \(\alpha_i\). The total contribution is
multiplied by some factor \(r > 0\) and split between all players. We consider
heterogeneous contributions and a homogeneous return.

The parameter sweep is done to the highest standards of
reproducibility~\cite{wilson2017, stodden2016, turingway2022, Wilson2014,
sandve2013} and all data has been archived at~\cite{data_archive} and is a
further contribution of the work. The source code and development process can be
seen at \texttt{https://github.com/hefos/The-Cooperation-Ceiling}. In order to
perform this sweep, the authors have developed an installable software library
\texttt{ludics} (written in python). The archive of \texttt{ludics} is
at~\cite{ludics_archive} and the documentation can be found
at~\cite{ludics_docs}. The full development of the software can be found
at~\cite{ludics}. The parameters considered, alongside their meanings in
reference to the public goods game, are shown in
Table~\ref{fig:parameter_sweep_table}.

\begin{table}[htbp!]
    \centering
    \caption{The set of measured parameters and boundary values. Some parameters
    are only applicable to certain processes. \(\alpha_i\) is entirely
    controlled by the value \(M\), which controls the maximum total contribution.
    The distribution of wealth remains the same for a given \(N\), however as
    \(M\) increases, so too does the size of each player's contribution
    \(\alpha_i\). The value \(A\) is the \emph{aspiration} considered in
    aspiration dynamics, in Section~\ref{sec:intrinsic_population_dynamics}. We
    consider a homogeneous return \(r\). The aspiration \(A\) depends on \(M\),
    and the selection intensity \(\varepsilon\) on both \(M\) and \(r\) in addition to \(N\)
    than on \(N\), and as such there are a large number of values per \(N\) for
    these parameters.}
    \label{fig:parameter_sweep_table}
    \begin{tabular}{|c|c|c|c|c|}
        \hline
        Parameter & Meaning &Minimum & Maximum & Number of Values per \(N\)\\
        \hline
        \(N\) & Number of players &2 & 8 & N/A\\
        \hline
        \(r\) & Return on investment & \(\frac{1}{2}\) & \(\frac{3N}{2}\) & 10\\
        \hline
        \(\alpha_i\) & Contribution of player \(i\) & N/A & N/A & N/A\\
        \hline
        \(M\) & Maximum total contribution & \(N\) & \(4N\) & 10\\
        \hline
        \(\mu\) & Mutation value & 0.001 & 0.1 & 4\\
        \hline
        \(\varepsilon\) & Selection intensity & 0 & \(\varepsilon_{\max}\), see~\eqref{eq:epsilon_bound} & 1000\\
        \hline
        \(\beta\) & Choice intensity & 0 & 2 & 10\\
        \hline
        \(A\) & Aspiration & 1 & \(M\) & 100\\
        \hline
    \end{tabular}
\end{table}

To keep every player's fitness positive, so that the Moran transition
probabilities are well defined, the selection intensity is restricted to the
interval \([0, \varepsilon_{\max}]\), where the upper bound depends on the return
\(r\) and the largest contribution \(\alpha_{\max} = \max_i \alpha_i\):
\begin{equation}
    \label{eq:epsilon_bound}
    \varepsilon_{\max} =
    \begin{cases}
        \dfrac{0.99}{1 - \alpha_{\max}\left(\frac{r}{N} - 1\right)}, & r < N,\\[2.4ex]
        \dfrac{\alpha_{\max}\left(\frac{r}{N} - 1\right) + 1}
              {\alpha_{\max}\left(\frac{r}{N} - 1\right) + 2}, & r \geq N.
    \end{cases}
\end{equation}
For each \((M, r)\) we take \(\varepsilon\) on an evenly spaced grid of ten values
in this interval.

Our results for the purely extrinsic dynamics are shown in
Figure~\ref{fig:ceiling_extrinsic}, and we explore Theorem~\ref{prop:ceiling}
numerically in the context of the 8-player public goods game; the other values
of \(N\) in our sweep are qualitatively identical. Panels (a) and (b) show this
limit as a hard ceiling on \(p_C\) which no parameter set in our data is able to
exceed. This includes parameter combinations which have high values of \(r >
N\), and so we see that despite full cooperation being the Nash equilibrium of
the one shot game, and increasing the payoff of players in the one-step
distribution, the pointwise dominance of defection over cooperation leads to
defection becoming the dominant strategy in a repeated game. This not only
leads to a worse average payoff for all players in the population in the steady
state, but also means that when \(r > N\), worsening moves occur with a higher
probability than changes of action types which improve a player's payoff. In
other words, even when a switch \(D \to C\) would always improve a player's
payoff, the player will still prefer to remain a defector under a purely
extrinsic population dynamic, due to the greater fitness of a defector in a
given state.

In Fermi imitation dynamics, the invariance under \(r\) can be shown directly.
In the public goods game the multiplied contribution
\(\frac{r}{N}\sum_{a_k = C}\alpha_k\) is shared equally by every player, so for
any two players \(i\) and \(j\) in a state \(\mathbf{a}\) this common term
cancels in their payoff difference:
\begin{equation}
    \pi_i(\mathbf{a}) - \pi_j(\mathbf{a})
    = \alpha_j \mathbf{1}_{a_j = C} - \alpha_i \mathbf{1}_{a_i = C},
    \label{eq:fermi_invariant}
\end{equation}
which does not depend on \(r\). As Fermi imitation dynamics depends on payoffs
only through such pairwise differences, its transition matrix and steady state,
and hence \(p_C\), are invariant to \(r\); this is confirmed in
Figure~\ref{fig:ceiling_extrinsic}(d). An increase in the choice intensity
\(\beta\) has a negative effect upon \(p_C\), as the value \(\Delta(\pi)\) will
always be negative for a \(C \to D\) transition, and positive for a
\(D \to C\) transition. As the
process with \(\beta=0\) gives us neutral drift, we see our result in practice,
as moving towards a less rational process increases \(p_C\) in all empirical
cases, up to the neutral-drift baseline. The notion of ``rationality'' in a
purely extrinsic process is not directly linked to increasing a player's payoff.
\(p_C\) in the Moran process increases with the value of \(r\), however once
again we see that it has a hard limit at \(p_C = \frac{1}{2}\), as a given
defector is always more likely to be chosen for reproduction than a given
cooperator.

One interesting point is the invariance of the Moran process under the size of
each player's contribution. The proportion of the state's total fitness given by
that of defectors is:
\begin{equation}
    \frac{qr\sum_{j=1}^N \alpha_j}{Nr\sum_{j=1}^N \alpha_j - \sum_{\mathbf{a}_l = C}\alpha_l} =
    \frac{qr\sum_{j=1}^N \alpha_j}{(Nr)\sum_{j=1}^N \alpha_j - \sum_{j=1}^N \alpha_j}=
    \frac{qr\sum_{j=1}^N \alpha_j}{(Nr-1)\sum_{j=1}^N \alpha_j} = \frac{qr}{Nr-1}
    \label{eq:Moran_invariant}
\end{equation}
The analogous result for choosing a defector is given by \(1 -
\frac{qr}{Nr-1}\), which is once again invariant under \(\alpha_i\). Therefore,
as Fermi imitation dynamics is invariant under the return parameter \(r\), the
Moran process is invariant under the contribution parameter \(\alpha_i\). We can
observe this effect in Figure~\ref{fig:ceiling_extrinsic}(c).

\begin{figure}[!htbp]
    \centering
    \begin{mdframed}

    \includegraphics[width=\linewidth]{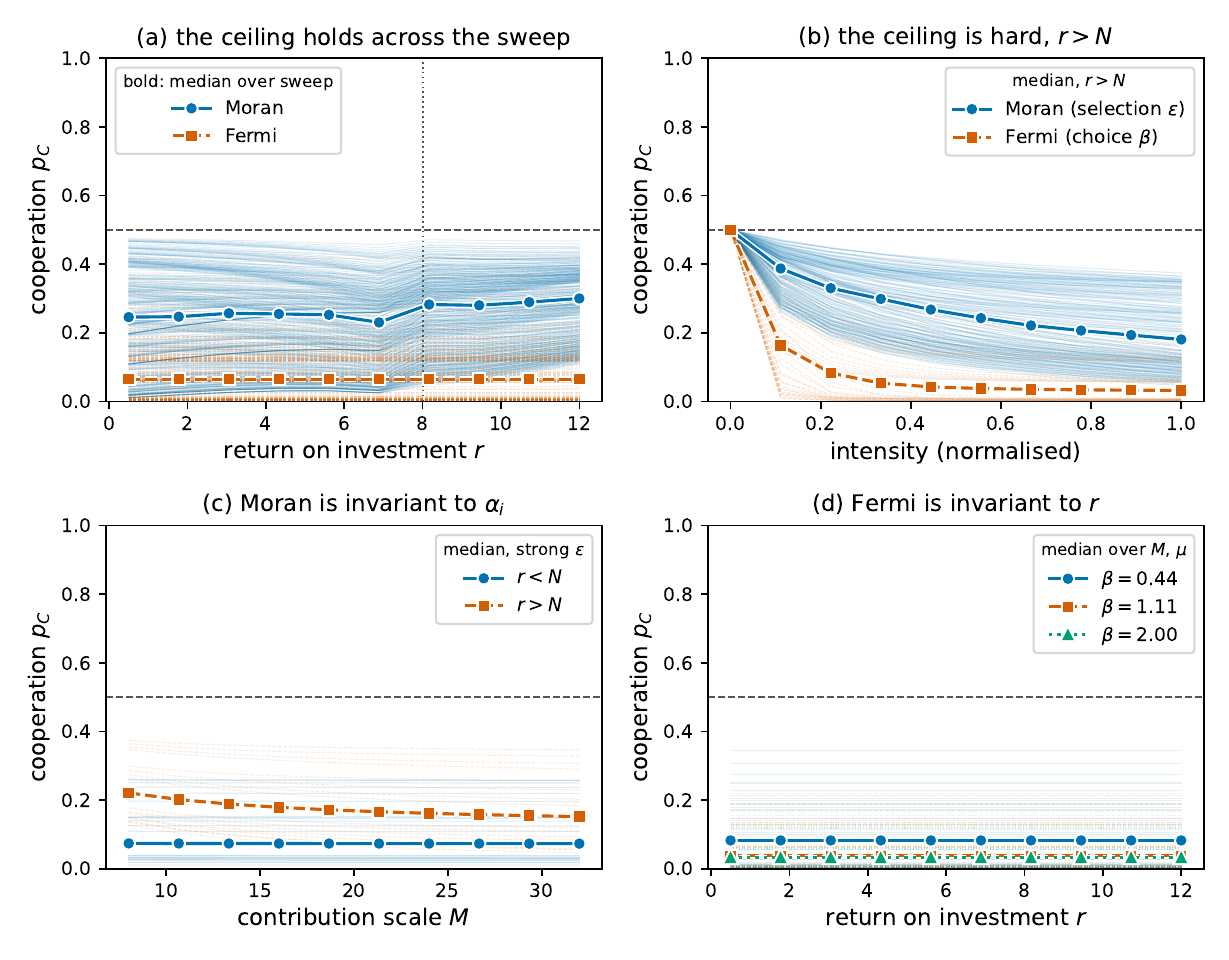}

    \end{mdframed}

    \caption{\textbf{The cooperation ceiling under purely extrinsic dynamics.}
    Throughout, \(N = 8\), a linear contribution profile, and mutation
    \(\mu = 0.05\); the dashed line marks the neutral-drift baseline
    \(p_C = \frac{1}{2}\) and, in (a), the dotted line marks \(r = N\). Faint
    curves show \(p_C\) trajectories across the sweep, coincident trajectories
    drawn once; in (a) they join parameter sets at equal selection- or
    choice-intensity rank across \(r\), since the selection-intensity grid is
    rescaled with \(r\) (Equation~\eqref{eq:epsilon_bound}). Bold curves are the
    median over all parameter sets. \textbf{(a)} Cooperation \(p_C\) against the
    return \(r\): the Moran process rises with \(r\), with a dip just below
    \(r = N\) where the selection-intensity bound is widest, but never exceeds
    \(\frac{1}{2}\); Fermi imitation dynamics stays well below it. \textbf{(b)}
    With \(r > N\), \(p_C\)
    against the normalised (dividing by the maximum value) selection intensity 
    \(\varepsilon\) (Moran) and choice
    intensity \(\beta\) (Fermi); neither exceeds \(\frac{1}{2}\), reaching it only
    as the intensity vanishes. \textbf{(c)} The Moran process is invariant to the
    contribution scale \(M\), as derived in Equation~\eqref{eq:Moran_invariant}.
    \textbf{(d)} Fermi imitation dynamics is invariant to the return \(r\), as
    derived in Equation~\eqref{eq:fermi_invariant}.}
    \label{fig:ceiling_extrinsic}
\end{figure}

\section{Intrinsic population dynamics}
\label{sec:intrinsic_population_dynamics}

We have seen how purely extrinsic population dynamics only compare a player's
payoffs to the payoff in the current state. We now see a classification of
population dynamic where players instead compare their payoff to some attribute
or payoff other than that of other players. We define a \emph{purely intrinsic
population dynamic} as follows:
\begin{definition}[Purely intrinsic population dynamic]
    A population dynamic defined by a transition matrix \(T\) 
    is said to be \emph{purely intrinsic} if it satisfies the following conditions:
    \begin{enumerate}
        \item \(T_{\mathbf{ab}}\) is a function of the payoffs of no player
        other than \(I(\mathbf{a,b})\) for \(\mathbf{a} \neq \mathbf{b}\).
        \item \(T_{\mathbf{ab}}\) is decreasing in the payoff
        \(\pi_{I(\mathbf{a,b})}(\mathbf{a})\) for \(\mathbf{a} \neq \mathbf{b}\).
        \item \(T_{\mathbf{ab}}\) contains no term dependent on
        \(a_{I(\mathbf{a,b})}\) or \(b_{I(\mathbf{a,b})}\), except in the payoff
        function \(\pi_{I(\mathbf{a,b})}\) and mutation parameters
        \(\mu_{ij}\).
    \end{enumerate}
\end{definition}

We now consider two purely intrinsic population dynamics: aspiration dynamics,
which compares a player's payoff to their individual baseline aspiration, and
introspection dynamics, which compares a player's payoff to their counterfactual
payoff when they change action type.

\subsection{Aspiration dynamics}

Under aspiration dynamics, players choose to change action based on how far away
their current payoff is from a target payoff, as shown in
Figure~\ref{fig:diagrams_of_dynamics}B. It
follows the algorithm:

\begin{enumerate}
       \item A player \(i\) is chosen with probability \(\frac{1}{N}\) to
       reconsider their strategy
       \item This player changes strategy with a probability
       \(\phi(\pi_i(\mathbf{a}) - A_i)\)
\end{enumerate}

Here, \(A_i\) is player \(i\)'s \textit{aspiration}, the payoff which they
desire to make from a given game. A player will accept a new strategy with a
higher than random probability if their current payoff is lower than their
aspired payoff. This dynamic is traditionally paired with two action
games~\cite{Du2015AspirationDynamics,du2014aspiration,Zhou2021AspirationDynamics},
though it can be defined for games with 3 or more
actions~\cite{Aspiration3Types}. Aspiration dynamics has also often been
combined with imitation
dynamics~\cite{PhysRevE.102.032120,PhysRevE.94.012124,QUAN2020109634,WANG2023128134},
either as a heterogeneous attribute of the population, or to form new population
dynamics. We consider the original form of aspiration dynamics defined
in~\cite{du2014aspiration}, with a homogeneous aspiration value \(A\). The
transition matrix is given by:

\begin{equation}
T_{\mathbf{ab}} =
\begin{cases}
\dfrac{1}{N} \, \phi(\pi_{I(\mathbf{a}, \mathbf{b})}(\mathbf{a}) - A_{I(\mathbf{a}, \mathbf{b})})\left(1 - \sum_{j=1}^k \mu_{I(\mathbf{a}, \mathbf{b}), j}\right)+ \frac{\mu_{I(\mathbf{a}, \mathbf{b}), \mathbf{b}_{I(\mathbf{a}, \mathbf{b})}}}{N}
    & \text{if } \mathbf{b} \in \mathrm{Neb}(\mathbf{a}),\\[1.2em]
0
    & \text{if } \mathbf{b} \notin \mathrm{Neb}(\mathbf{a}) \text{ and } \mathbf{a} \neq \mathbf{b},\\[0.8em]
1 - \sum_{\mathbf{b} \in S \setminus \{\mathbf{a}\}} T_{\mathbf{ab}} & \text{if } \mathbf{a} = \mathbf{b}.
\end{cases}
\label{eqn:Aspiration_Probability_Function}
\end{equation}

This population dynamic spends a large amount of time in states where most
players achieve their aspired payoffs. In this case, players will have a lower
probability of switching strategy, compared to a state where players are
underperforming in comparison to their target \(A\). We now show that Aspiration
dynamics is a purely intrinsic population dynamic:

\begin{theorem}[Aspiration dynamics is a purely intrinsic population dynamic]
    Aspiration dynamics, as defined by
    Equation~\eqref{eqn:Aspiration_Probability_Function}, is a purely intrinsic
    population dynamic.
\end{theorem}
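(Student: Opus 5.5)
We verify the three conditions of the definition of a purely intrinsic population dynamic directly from Equation~\eqref{eqn:Aspiration_Probability_Function}, mirroring the proofs given above for the Moran process and Fermi imitation dynamics. Fix \(\mathbf{a} \neq \mathbf{b}\). If \(\mathbf{b} \notin \mathrm{Neb}(\mathbf{a})\), then \(T_{\mathbf{ab}} = 0\). This constant trivially depends on no payoffs and on no action labels. For the monotonicity requirement in this case we read ``decreasing'' as non-increasing, so these entries are handled at once. The substantive case is therefore \(\mathbf{b} \in \mathrm{Neb}(\mathbf{a})\), where, writing \(I = I(\mathbf{a,b})\),
\[
T_{\mathbf{ab}} = \frac{1}{N}\,\phi\bigl(\pi_I(\mathbf{a}) - A_I\bigr)\Bigl(1 - \sum_{j=1}^k \mu_{I,j}\Bigr) + \frac{\mu_{I, b_I}}{N}.
\]

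For condition (1), the only payoff appearing in this expression is \(\pi_I(\mathbf{a})\). The remaining ingredients, \(N\), \(A_I\), \(\beta\) inside \(\phi\), and the mutation parameters, are fixed parameters of the process and are not payoffs of any other player. So \(T_{\mathbf{ab}}\) is a function of the payoff of player \(I\) alone.

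For condition (2), we differentiate with respect to \(\pi_I(\mathbf{a})\). With \(\phi(x) = 1/(1+e^{\beta x})\) we have \(\phi'(x) = -\beta e^{\beta x}/(1+e^{\beta x})^2\), which is strictly negative for \(\beta > 0\). We assume, as in the proof for the Moran process, that \(\sum_j \mu_{I,j} < 1\), so the prefactor \(\frac{1}{N}(1 - \sum_j \mu_{I,j})\) is positive. Hence
\[
\frac{\partial T_{\mathbf{ab}}}{\partial \pi_I(\mathbf{a})} = \frac{1}{N}\Bigl(1 - \sum_{j=1}^k \mu_{I,j}\Bigr)\phi'\bigl(\pi_I(\mathbf{a}) - A_I\bigr) < 0,
\]
and the mutation term does not depend on \(\pi_I(\mathbf{a})\). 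This gives strict decrease. The only point requiring care is the degenerate parameters. At \(\beta = 0\) or \(\sum_j \mu_{I,j} = 1\), the derivative vanishes and \(T_{\mathbf{ab}}\) is only non-increasing. I would handle this by restricting to \(\beta > 0\) and \(\sum_j \mu_{I,j} < 1\), exactly as the Moran proof requires \(\varepsilon > 0\), or else by noting that the limiting cases are neutral drift, where monotonicity holds in the weak sense.

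For condition (3), the expression contains \(a_I\) and \(b_I\) only through two channels. The first is the payoff \(\pi_I(\mathbf{a})\). The second is the mutation parameters \(\mu_{I,j}\) and \(\mu_{I,b_I}\). The aspiration \(A_I\) is player-specific but not action-specific; in the homogeneous setting it equals \(A\) for everyone. The factor \(\phi\) is applied identically whether the move is \(C \to D\) or \(D \to C\). Both channels are explicitly permitted by condition (3), so it holds. Condition (2) is the only step involving any computation, and I expect its only real obstacle to be this bookkeeping of the degenerate parameter values.
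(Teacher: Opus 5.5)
Your proposal is correct and follows the same route as the paper: you verify conditions (1) and (3) by inspection of Equation~\eqref{eqn:Aspiration_Probability_Function}, and you obtain condition (2) from the fact that \(\phi\) is decreasing and that \(\pi_{I(\mathbf{a,b})}(\mathbf{a})\) enters only through \(\pi_{I(\mathbf{a,b})}(\mathbf{a}) - A_{I(\mathbf{a,b})}\). Your explicit derivative and your caveat about \(\beta = 0\) or \(\sum_j \mu_{I,j} = 1\), where only weak monotonicity holds, sharpen a point the paper's one-line proof glosses over, but they do not change the argument.
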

\begin{proof}
    It is clear from the definition of \(T_{\mathbf{ab}}\) that aspiration
    dynamics satisfies conditions (1) and (3). It remains to show that
    aspiration dynamics is decreasing in the payoff
    \(\pi_{I(\mathbf{a,b})}(\mathbf{a})\). We have that \(\phi(x)\) is
    decreasing, and \(\pi_{I(\mathbf{a,b})}(\mathbf{a})\) appears only in the
    value \(\pi_{I(\mathbf{a,b})}(\mathbf{a}) - A_{I(\mathbf{a,b})}\).
    Therefore, setting \(x = \pi_{I(\mathbf{a,b})}(\mathbf{a}) -
    A_{I(\mathbf{a,b})}\), we get that \(\phi\) is decreasing in
    \(\pi_{I(\mathbf{a,b})}(\mathbf{a})\), and therefore so is
    \(T_{\mathbf{ab}}\).
\end{proof}

\subsection{Introspection dynamics}

Unlike the previous population dynamics we have seen, introspection
dynamics~\cite{Couto2022, Couto2023} allows players to consider the payoff they
could have in other states. Under this population dynamic, players compare their
current payoff with their counterfactual payoff upon changing action type. The
process, as shown in Figure~\ref{fig:diagrams_of_dynamics}D, is given by the
algorithm:

\begin{enumerate}
    \item A player \(i\) is chosen with probability \(\frac{1}{N}\) to
    reconsider their strategy
    \item Player \(i\) randomly chooses another possible strategy, \(a_{il}\),
    from the set of all possible strategies. 
    \item The chosen player replaces their strategy with probability \(\phi
    (\Delta \pi_{I(\mathbf{a}, \mathbf{b})})\), where \(\Delta
    \pi_{I(\mathbf{a}, \mathbf{b})} = \pi_{I(\mathbf{a},
    \mathbf{b})}(\mathbf{a})  - \pi_{I(\mathbf{a}, \mathbf{b})}(\mathbf{b})\) is
    the difference between the payoff of the player's current strategy and the
    payoff of the new strategy.
\end{enumerate}

Introspection dynamics is well suited to a heterogeneous population, as players
are able to consider whether a strategy will work for them, rather than for
another player. As players check the payoff of a strategy in the next state,
they never favour worsening moves in the one-step case, as the value \(\phi
(\Delta \pi_{I(\mathbf{a}, \mathbf{b})})\) is greater than \(\frac{1}{2}\) if
and only if the change in a player's strategy increases their payoff. The
transition matrix is given by:

\begin{equation}
T_{\mathbf{ab}} =
\begin{cases}
\dfrac{1}{N(k - 1)} \, \phi (\Delta (\pi_{I(\mathbf{a}, \mathbf{b})}))\left(1 - \sum_{j=1}^k \mu_{I(\mathbf{a}, \mathbf{b}), j}\right)+ \frac{\mu_{I(\mathbf{a}, \mathbf{b}), \mathbf{b}_{I(\mathbf{a}, \mathbf{b})}}}{N}
    & \text{if } \mathbf{b} \in \mathrm{Neb}(\mathbf{a}),\\[1.2em]
0
    & \text{if } \mathbf{b} \notin \mathrm{Neb}(\mathbf{a}) \text{ and } \mathbf{a} \neq \mathbf{b},\\[0.8em]
1 - \sum_{\mathbf{b} \in S \setminus \{\mathbf{a}\}} T_{\mathbf{ab}} & \text{if } \mathbf{a} = \mathbf{b}.
\end{cases}
\label{eqn:Introspection_Probability_Function}
\end{equation}

Introspection dynamics is aligned with classic static game theory, where a well
informed and rational actor maximises their own payoff. We can contrast this
with purely extrinsic population dynamics, which are the traditional method of
modelling populations in evolutionary games, where actors are not typically
well-informed and choose their next strategy according to fitnesses in the
current population, without knowledge of the impact of their transition.

\begin{theorem}[Introspection dynamics is a purely intrinsic population dynamic]
    Introspection dynamics, as defined by
    Equation~\eqref{eqn:Introspection_Probability_Function}, is a purely
    intrinsic population dynamic.
\end{theorem}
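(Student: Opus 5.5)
The plan mirrors the proof given for aspiration dynamics: check the three conditions of the definition of a purely intrinsic population dynamic directly from Equation~\eqref{eqn:Introspection_Probability_Function}.

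Fix \(\mathbf{a} \neq \mathbf{b}\) and write \(I = I(\mathbf{a},\mathbf{b})\). If \(\mathbf{b} \notin \mathrm{Neb}(\mathbf{a})\), then \(T_{\mathbf{ab}} = 0\). This constant satisfies condition (1) and condition (3) trivially. For condition (2) on these pairs we adopt the same convention as the aspiration proof and the definition implicitly intend: the conditions are read on neighbouring transitions, so a constant entry imposes nothing.

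For \(\mathbf{b} \in \mathrm{Neb}(\mathbf{a})\), the only payoff-dependent ingredient in \(T_{\mathbf{ab}}\) is \(\phi(\pi_I(\mathbf{a}) - \pi_I(\mathbf{b}))\). This involves only player \(I\)'s payoff function, evaluated at two states, and gives condition (1).

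For condition (3), the remaining ingredients are the prefactor \(\frac{1}{N(k-1)}\), which is action-free, and the mutation term. The actions \(a_I\) and \(b_I\) therefore appear only inside \(\pi_I\) (through the states \(\mathbf{a}\) and \(\mathbf{b}\)) and inside the mutation parameters, both of which the condition explicitly allows.

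For condition (2), we treat \(\pi_I(\mathbf{a})\) as the variable and hold \(\pi_I(\mathbf{b})\) and the mutation parameters fixed. The argument \(x = \pi_I(\mathbf{a}) - \pi_I(\mathbf{b})\) is increasing in \(\pi_I(\mathbf{a})\). Because \(\phi(x) = (1+e^{\beta x})^{-1}\) has derivative \(-\beta e^{\beta x}/(1+e^{\beta x})^2\), \(\phi\) is decreasing, and strictly decreasing when \(\beta > 0\). The factor \(\frac{1}{N(k-1)}(1 - \sum_j \mu_{Ij})\) is nonnegative, and strictly positive under the standing assumption \(\sum_j \mu_{Ij} < 1\) used in the Moran proof. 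Multiplying \(\phi\) by this factor and adding a constant mutation term therefore preserves the monotonicity, so \(T_{\mathbf{ab}}\) is decreasing in \(\pi_I(\mathbf{a})\).

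The only subtle point is the meaning of ``decreasing'' in condition (2). The counterfactual payoff \(\pi_I(\mathbf{b})\) also enters \(T_{\mathbf{ab}}\), so we must state that monotonicity is taken in the current-state payoff with the counterfactual payoff treated as an independent argument, exactly as the aspiration level \(A_I\) was treated in the aspiration proof. The degenerate case \(\beta = 0\) only gives non-strict monotonicity, and we note that it is covered in the same sense as the analogous aspiration case.
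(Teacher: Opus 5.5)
Your proposal is correct and follows the paper's proof. Conditions (1) and (3) are read off directly from Equation~\eqref{eqn:Introspection_Probability_Function}, and condition (2) follows because \(\phi\) is decreasing and its argument \(x = \pi_{I(\mathbf{a,b})}(\mathbf{a}) - \pi_{I(\mathbf{a,b})}(\mathbf{b})\) is increasing in \(\pi_{I(\mathbf{a,b})}(\mathbf{a})\). Your extra remarks make explicit what the paper's two-line proof leaves implicit: the counterfactual payoff is treated as a fixed argument, the prefactor is positive when \(\sum_j \mu_{Ij} < 1\), and only non-strict monotonicity holds at \(\beta = 0\).
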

\begin{proof}
    It is once again clear that introspection dynamics satisfies conditions (1) and (3).
    For condition (2), consider that \(\phi(x)\) is decreasing. Setting 
    \(x = \pi_{I(\mathbf{a,b})}(\mathbf{a})- \pi_{I(\mathbf{a,b})}(\mathbf{b})\), we have our result.
\end{proof}

Unlike purely extrinsic population dynamics, introspection dynamics does not
necessarily cancel out certain parameters of games, unless said parameter does
not affect the difference in a player's payoff when playing different action
types. For example, if we consider the public goods game, we can see that for
introspection dynamics,
\begin{equation}
    \Delta(\pi_{I(\mathbf{a}, \mathbf{b})}) = \alpha_i\!\left(\frac{r}{N} - 1\right),
    \label{eq:introspection_threshold}
\end{equation}
which considers both \(\alpha_i\) and \(r\), and changes sign at \(r = N\).

\subsection{Intrinsic population dynamics in a public goods game}

\begin{figure}[!htbp]
    \centering
    \begin{mdframed}
    \includegraphics[width=\linewidth]{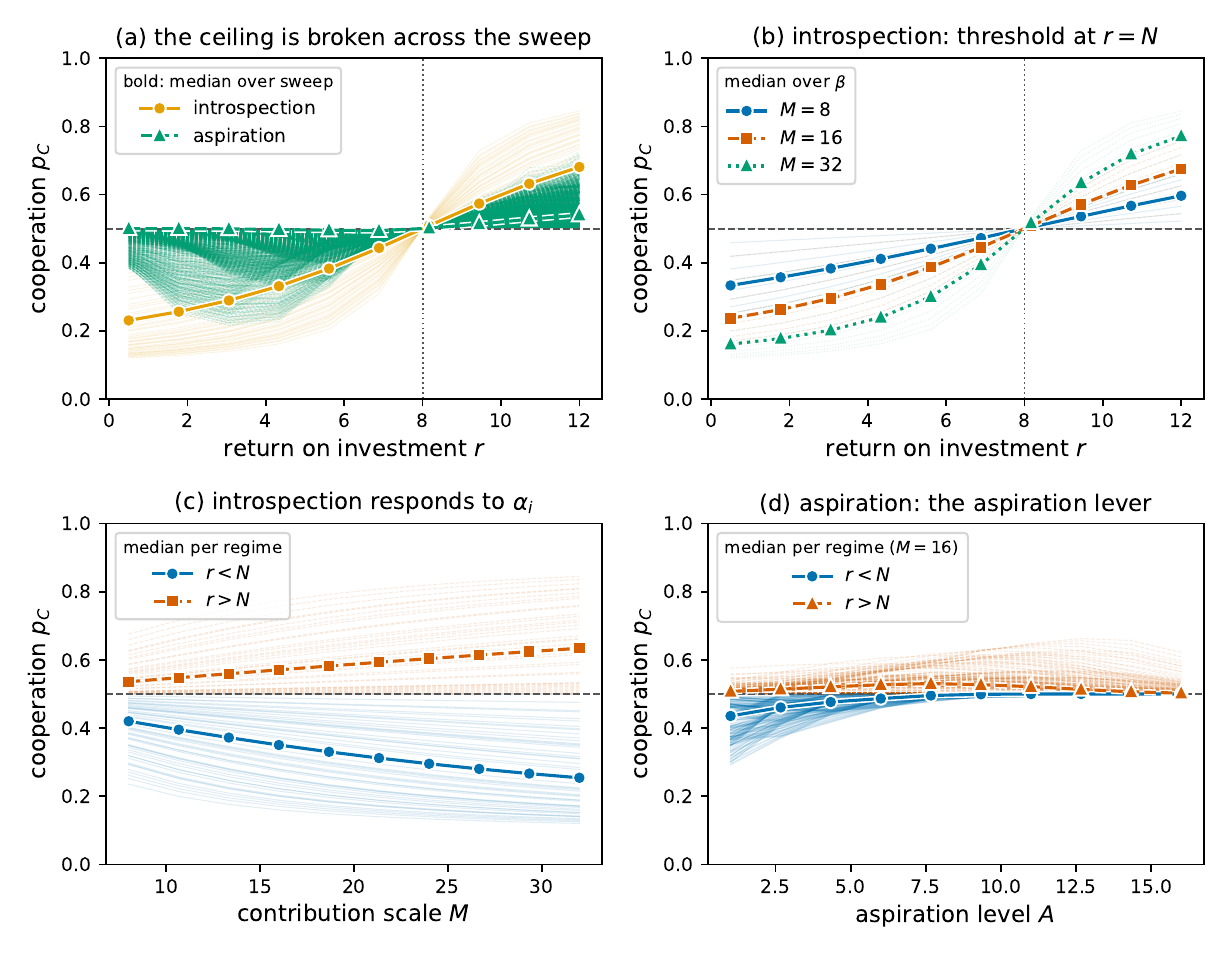}

    \end{mdframed}

    \caption{\textbf{Intrinsic dynamics exceed the cooperation ceiling.} Same
    population as Figure~\ref{fig:ceiling_extrinsic}; faint curves show the
    \(p_C\) trajectory of every parameter set in the sweep (coincident
    trajectories are drawn once), and the bold curves their median taken over all
    parameter sets, with the dashed line at \(p_C = \frac{1}{2}\) and the dotted
    line at \(r = N\).
    \textbf{(a)} Cooperation \(p_C\) against the return \(r\) for introspection and
    aspiration dynamics; both exceed \(\frac{1}{2}\) once \(r > N\). \textbf{(b)}
    Introspection \(p_C\) against \(r\) at several contribution scales \(M\), every
    curve crossing \(\frac{1}{2}\) exactly at \(r = N\): the threshold is set by the
    sign of the counterfactual payoff difference
    \(\Delta(\pi_{I(\mathbf{a},\mathbf{b})})\) in
    Equation~\eqref{eq:introspection_threshold}. \textbf{(c)} Introspection \(p_C\)
    against \(M\): it increases with the contribution scale when \(r > N\) and
    decreases when \(r < N\). \textbf{(d)} Aspiration \(p_C\) against the aspiration
    level \(A\) at \(M = 16\).}
    \label{fig:intrinsic_PGG}
\end{figure}

In Figure~\ref{fig:intrinsic_PGG}, we see that purely intrinsic population
dynamics are able to exceed the ceiling on \(p_C\) which is imposed on purely
extrinsic population dynamics. In the case of aspiration dynamics, \(p_C\) does
not reach excessively low values as in the other processes we have discussed.
This is because for all players to have a payoff \(\pi_i > 0\), at least one
player must be cooperating at all times. If no players cooperate, then no player
will reach their target payoff threshold, and players will switch strategy with
a high probability. The ceiling is not exceeded for \(r < N\), as shown in
Figure~\ref{fig:intrinsic_PGG}(a) and noted for a linear game
in~\cite{du2014aspiration}, but is exceeded at the threshold \(r > N\). In this
case, \(\pi_i(\mathbf{a}) - A\) is greater when cooperating than defecting in a
given population \(\mathbf{a}_{-i}\) for all players, and thus the transition \(D \to
C\) will always occur with a higher probability than the reverse. For
intermediate values of \(r\), aspiration dynamics can achieve its lowest
\(p_C\), as with a low value of \(A\), players are able to achieve their
aspiration by free-riding in a state with fewer cooperators, and so there is no
incentive to begin cooperating in such a state.
Figure~\ref{fig:intrinsic_PGG}(d) shows how \(p_C\) depends on the aspiration
level \(A\). For \(r > N\), \(p_C\) is non-monotone in \(A\): it rises to a peak
and then falls back towards \(\frac{1}{2}\). Aspiration only discriminates
between the two actions when \(A\) lies between the payoffs of a defector and a
cooperator, or else both transitions \(D \to C\) or \(C \to D\) will be favoured
or surpressed. A low \(A\) leaves defectors content and unwilling to switch,
whereas an \(A\) above even the cooperator's payoff leaves every player
dissatisfied, so switching becomes action-independent and the advantage of
cooperation is washed out. The aspiration level is the biggest indicator of
cooperation in a linear public goods game where a player's fitness is given by
their expected return in a 2-player game. It has been shown previously that for
a high aspiration level, cooperation emerges at a level near the neutral-drift
baseline, but for a low aspiration level defectors dominate the
population~\cite{du2014aspiration}.

As the heterogeneous public goods game is an additive game~\cite{Couto2023}, we
can show that many of the analytical results of~\cite{foster_knight_2026} about
the structure of introspection dynamics in a public goods game hold empirically.
Each player \(i\) cooperates with probability \(p_i = \phi_i(\alpha_i(1 -
\frac{r}{N}))(1 - 2\mu) + \mu\), and \(p_C\) is given as a product
measure~\cite{Couto2023}. Thus, we have that as the value
\(\Delta(\pi_{I(\mathbf{a,b})}) = \alpha_i(\frac{r}{N} - 1)\) is positive for
\(r > N\) and negative for \(r < N\), there is a sharp turning point at \(r =
N\) where introspection dynamics begins to favour cooperation, as shown in
Figure~\ref{fig:intrinsic_PGG}(b). This gives an exact parameter value at
which the ceiling of \(p_C\) is overcome.

Figure~\ref{fig:intrinsic_PGG}(c) shows the result that \(\frac{\partial
p_C}{\partial \alpha_i} > 0\) if and only if \(r > N\), as the value
\(\alpha_i(\frac{r}{N} - 1)\) becomes greater in this case, but becomes more
negative if \(r < N\). A greater contribution will increase a player's payoff if
multiplied by a large \(r\), but for a small \(r\) a larger contribution results
in a greater loss to the player.

The same conditions also apply to the effect of \(\beta\). This is because
\(\phi_i(\Delta(\pi_{I(\mathbf{a,b})}))\) is decreasing. Therefore, as \(\beta\)
scales \(\Delta(\pi_{I(\mathbf{a,b})})\), when \(\Delta(\pi_{I(\mathbf{a,b})})\)
is negative, \(\frac{\partial p_C}{\partial \alpha_i} > 0\), and the inequality
is reversed if the opposite is true. We acquire \(\frac{\partial p_C}{\partial
\alpha_i} < 0\) at exactly the value \(r > N\), and thus we have our result.
These results are proven in~\cite{foster_knight_2026}, but have been shown
empirically for our large data sweep in Figure~\ref{fig:intrinsic_PGG}.

Pooling the entire parameter sweep makes the contrast between intrinsic and
extrinsic dynamics explicit. Figure~\ref{fig:ceiling_overview} collects every
parameter combination with \(\mu > 0\), across all \(N\), contributions,
returns, intensities and aspirations. No purely extrinsic parameter set anywhere
in the sweep exceeds the neutral-drift baseline \(p_C = \frac{1}{2}\), and the
largest \(p_C\) attained is pinned at exactly \(\frac{1}{2}\) for every \(N\).
The intrinsic dynamics, by contrast, place a substantial fraction of their mass
above the ceiling, and this fraction switches on precisely at \(r = N\)
(Table~\ref{tab:ceiling_crossing}).

\begin{figure}[!htbp]
    \centering
    \begin{mdframed}
    \includegraphics[width=\linewidth]{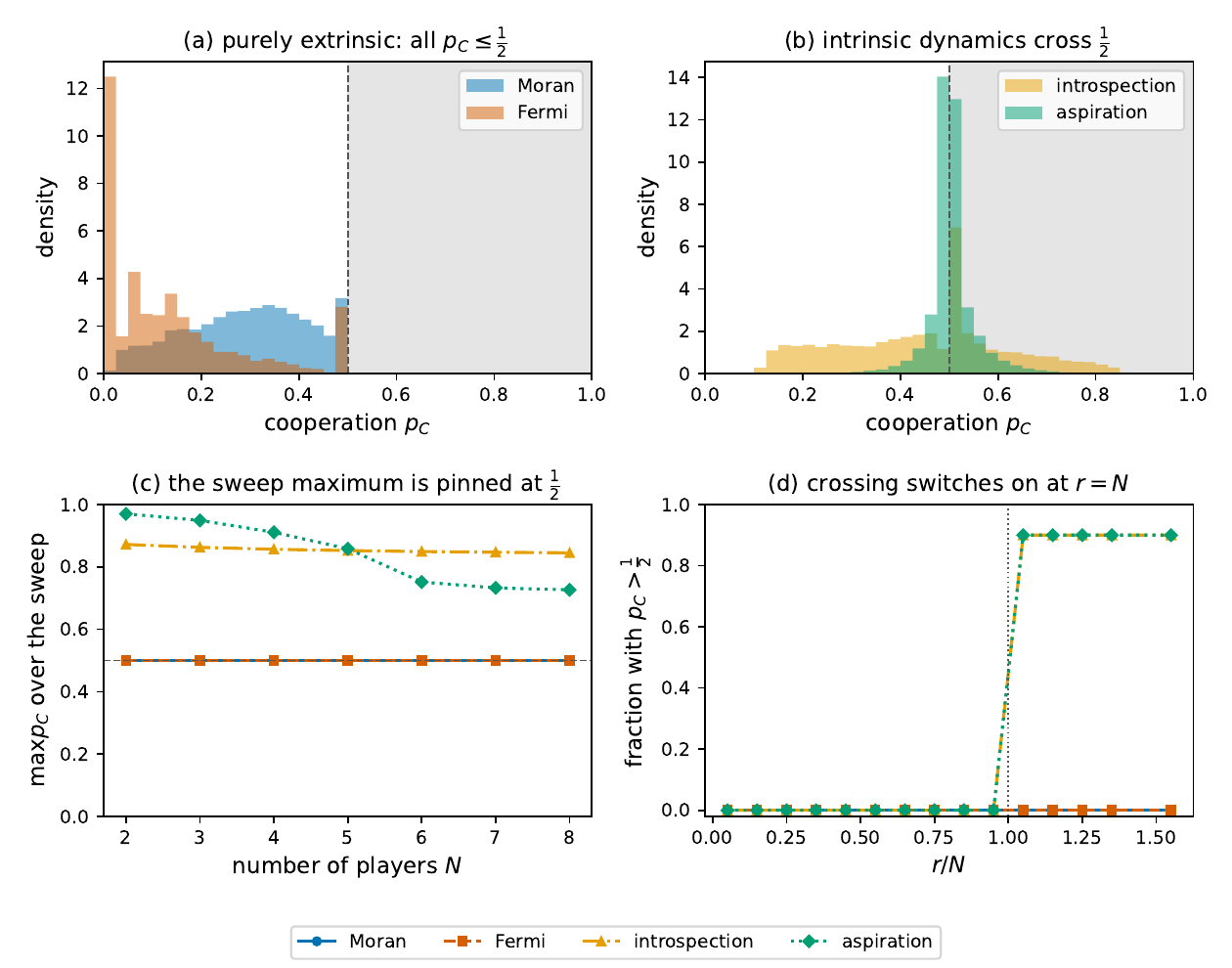}
    \end{mdframed}
    \caption{\textbf{The cooperation ceiling across the entire sweep.} Each panel
    pools every parameter combination with \(\mu > 0\). \textbf{(a)} The
    distribution of \(p_C\) for the purely extrinsic dynamics lies entirely at or
    below the neutral-drift baseline \(p_C = \frac{1}{2}\) (shaded region).
    \textbf{(b)} The intrinsic dynamics place mass above \(\frac{1}{2}\).
    \textbf{(c)} The maximum \(p_C\) attained anywhere in the sweep, as a function
    of \(N\): for the extrinsic dynamics it is pinned at exactly \(\frac{1}{2}\) at
    every \(N\). \textbf{(d)} The fraction of parameter sets with
    \(p_C > \frac{1}{2}\) as a function of \(r / N\): it is zero for the extrinsic
    dynamics at every return, and for the intrinsic dynamics switches on precisely
    at \(r = N\).}
    \label{fig:ceiling_overview}
\end{figure}

\begin{table}[!htbp]
    \centering
    \caption{Percentage of parameter sets whose steady state crosses the
    cooperation ceiling, \(p_C > \frac{1}{2}\), by dynamic and return regime,
    over the full sweep with \(\mu > 0\).}
\begin{tabular}{lrrrr}
\hline
dynamic & parameter sets & all & \(r < N\) & \(r > N\) \\
\hline
Moran & 28,000 & 0.0\% & 0.0\% & 0.0\% \\
Fermi & 28,000 & 0.0\% & 0.0\% & 0.0\% \\
introspection & 28,000 & 36.0\% & 0.0\% & 90.0\% \\
aspiration & 280,000 & 36.0\% & 0.0\% & 90.0\% \\
\hline
\end{tabular}

    \label{tab:ceiling_crossing}
\end{table}

\section{Robustness to population size}
\label{sec:robustness_to_population_size}

The exact steady-state analysis and the parameter sweep so far have used small
populations, for which the transition matrix can be formed explicitly. Since the
size of the state space grows as \(2^N\), this is infeasible for large \(N\). To
confirm that the cooperation ceiling and its breaking are not artefacts of small
populations, we estimate \(p_C\) for populations as large as \(N = 100\) by
simulating the Markov chain directly with~\cite{ludics_archive} and averaging
the cooperator fraction over an ergodic run with mutation \(\mu = 0.05\).

Figure~\ref{fig:large_n} shows that the conclusions of the paper survive at
\(N = 100\). Panel~\ref{fig:large_n}A shows that the purely extrinsic dynamics
remain at or below the neutral-drift baseline \(p_C = \frac{1}{2}\) for all
returns; introspection dynamics rises clearly through \(\frac{1}{2}\) at
\(r = N\), while aspiration dynamics stays close to it. Panel~\ref{fig:large_n}B
shows the cooperation probability at a high return (\(r / N = 1.3\)) for each
population size: the extrinsic dynamics stay well below \(\frac{1}{2}\) across the
whole range \(10 \leq N \leq 100\), introspection lies clearly above it, and
aspiration sits just above \(\frac{1}{2}\) with a margin that narrows as \(N\)
grows. Panel~\ref{fig:large_n}C takes introspection at a different choice
intensity \(\beta\) for each of \(N \in \{10, 50, 100, 200\}\): the curves differ
in steepness, yet every one of them crosses \(\frac{1}{2}\) at exactly
\(r = N\), so the threshold is robust to both the population size and the choice
intensity. Panel~\ref{fig:large_n}D validates the
simulator against the exact steady state for \(2 \leq N \leq 8\): the simulated
estimates bracket the exact value for all four dynamics. The extrinsic ceiling is therefore unbroken at every
population size, and an intrinsic dynamic continues to exceed it for large \(N\).

\begin{figure}[!htbp]
    \centering
    \begin{mdframed}
    \includegraphics[width=\linewidth]{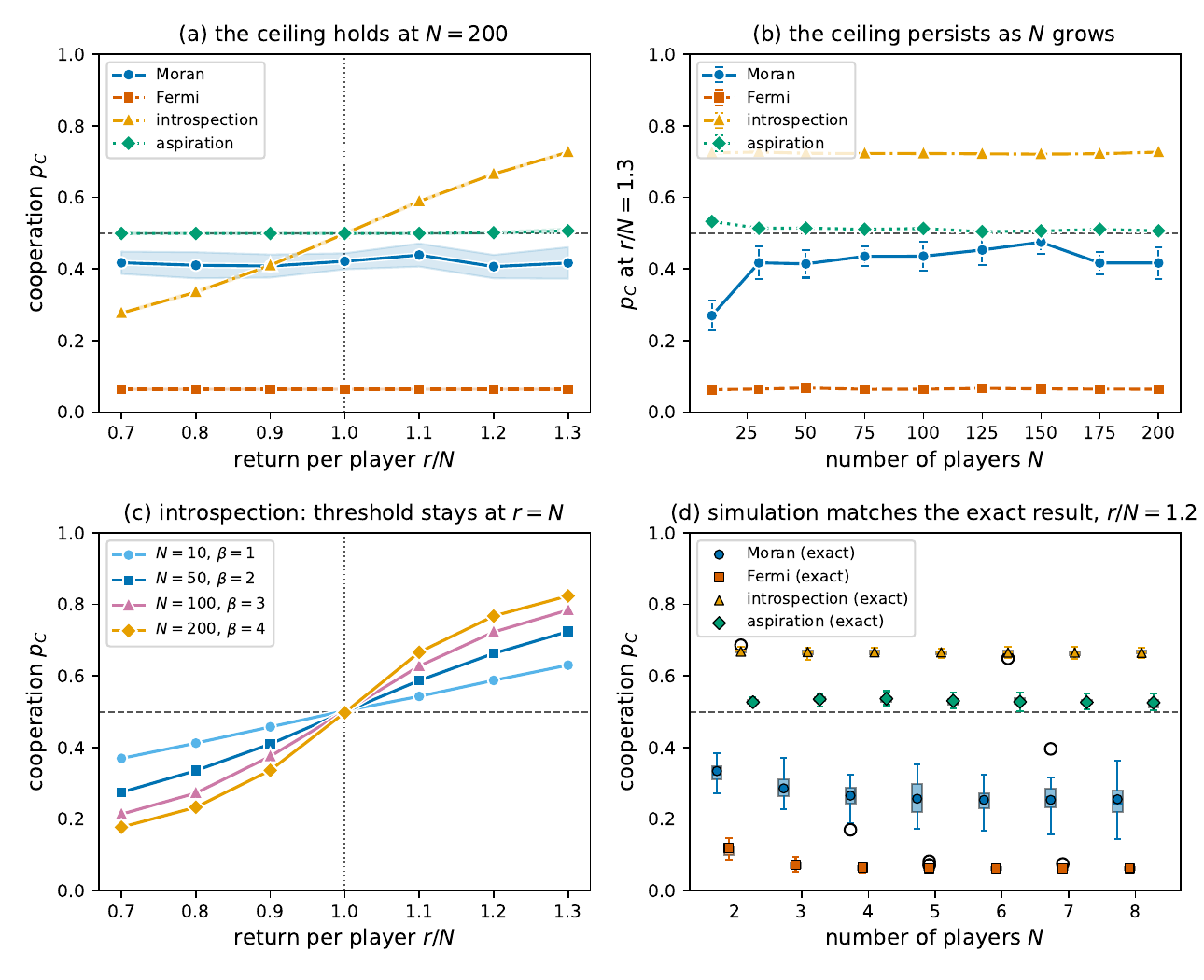}
    \end{mdframed}
    \caption{\textbf{The cooperation ceiling persists for large populations.}
    Cooperation \(p_C\) is estimated by simulating the Markov chain and averaging
    the cooperator fraction over the run. Throughout, mutation \(\mu = 0.05\), a
    linear contribution profile of scale \(M = 2N\), selection intensity
    \(\varepsilon = 0.5\) for the Moran process, choice intensity \(\beta = 2\) for
    Fermi, introspection and aspiration dynamics, and aspiration level
    \(A = 0.7\,M\); bands, error bars and boxes show the variation over
    independent seeds.
    \textbf{A,} \(p_C\) against \(r / N\) at \(N = 100\): the
    extrinsic dynamics (Moran, Fermi) stay at or below \(\frac{1}{2}\);
    introspection crosses it at \(r = N\), while aspiration only marginally
    exceeds it. \textbf{B,} \(p_C\) at a high return (\(r / N = 1.3\)), against the
    population size \(N\): below \(\frac{1}{2}\) for the extrinsic dynamics at
    every \(N\), clearly above it for introspection, and just above it for
    aspiration. Each point carries a 95\% error bar over seeds; for Fermi,
    introspection and aspiration the variation is so small that the bars are
    smaller than the markers, while the Moran process shows visible sampling
    noise that shrinks as \(N\) grows. \textbf{C,} Introspection \(p_C\) against \(r / N\), each \(N\) at
    a different choice intensity \(\beta\): the curves differ in shape but all
    cross \(\frac{1}{2}\) at \(r = N\).
    \textbf{D,} Validation against the exact steady state for \(2 \leq N \leq 8\)
    at \(r / N = 1.2\): the boxes show the simulated estimates over seeds and the
    markers the exact values, for all four dynamics.}
    \label{fig:large_n}
\end{figure}

\section{Conclusion}

We have introduced and analysed a framework for heterogeneous evolutionary games
under four distinct population dynamics in two categories, as shown in
Figure~\ref{fig:diagrams_of_dynamics}. We have defined the categorisation of a
\emph{purely extrinsic population dynamic}, and given rigorous formulations of
two widely used examples: the Moran process and Fermi imitation dynamics. We
have shown that dynamics under this classification have a hard ceiling on the
probability of cooperation \(p_C\) in social dilemmas where a defector
\emph{always} outperforms a cooperator in a given state. In such a game, purely
extrinsic population dynamics are unable to exceed the neutral-drift baseline.
We then give formulations of two purely intrinsic population dynamics:
aspiration dynamics in which players change strategy based on their performance
in relation to a target payoff, and introspection dynamics in which players
compare their current payoff with their counterfactual payoff. We have also
shown empirical results for the public goods game with heterogeneous
contributions which confirm the extrinsic ceiling, as well as the ability for
purely intrinsic population dynamics to exceed it.

Further work may consider populations in which the choice between intrinsic and
extrinsic update rules is itself a heterogeneous attribute of the players, in
the spirit of the combination of Moran and Fermi processes in~\cite{LIU2015242}.
Figure~\ref{fig:conclusion-fig}(a) illustrates this for a population that mixes
Moran and introspection players: once the return exceeds the threshold \(r = N\)
the population can exceed the neutral-drift baseline, though the number of
intrinsic players required depends on the relative strength of the two steps. A
stronger extrinsic selection intensity \(\varepsilon\), or a weaker choice
intensity \(\beta\), demands more intrinsic players before \(p_C\) crosses the
baseline. For \(r < N\) the same intrinsic players instead suppress cooperation.
The heterogeneity considered in this paper is also limited to the players'
contributions, whereas the introspection threshold acts at the level of each
player's own return. Figure~\ref{fig:conclusion-fig}(b) shows that under a
heterogeneous return \(r_i\), cooperation crosses the ceiling once enough
players sit above the \(r = N\) threshold. The interplay between such
heterogeneous returns, contributions and update rules is a natural direction for
further study. Finally, we only consider the ceiling of \(p_C\) in two-action
games for purely extrinsic population dynamics. It is entirely possible that
there is a formulation of a \(k\)-action game such that a ceiling is posed on
the abundance of a certain action. This may be a focus of future work, extending
the limitation of purely extrinsic population dynamics to a larger set of games.

\begin{figure}[!htbp]
    \centering
    \begin{mdframed}
    \includegraphics[width=\linewidth]{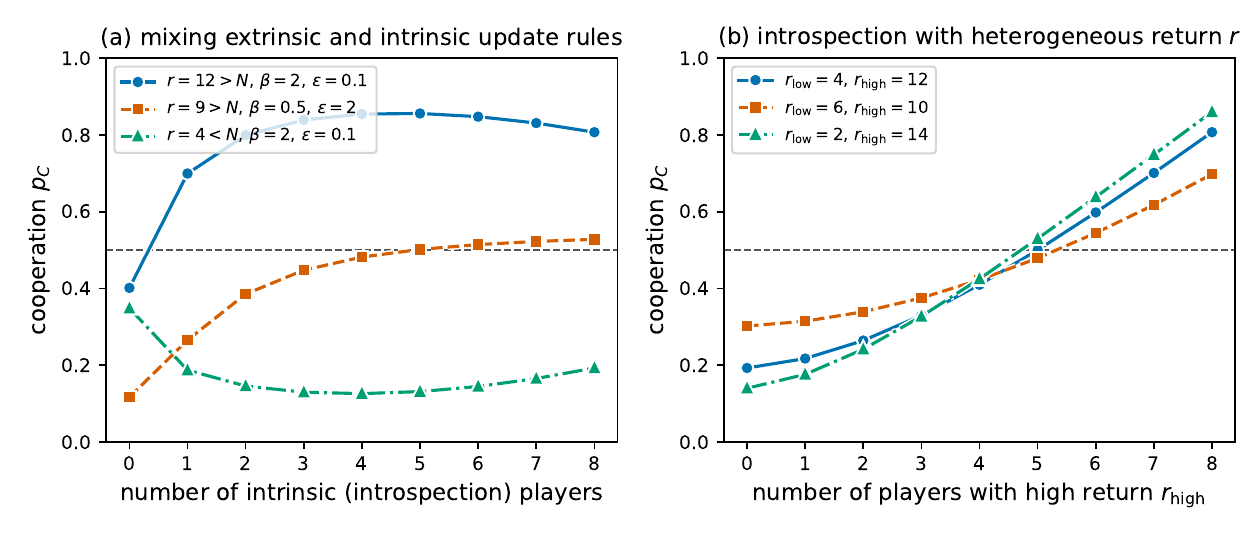}
    \end{mdframed}
    \caption{\textbf{Heterogeneous update rules and heterogeneous returns.} Both
    panels are computed exactly from the steady state of the corresponding
    transition matrix, with \(N = 8\), a linear contribution profile of scale
    \(M = 16\), and mutation \(\mu = 0.05\); the dashed line marks the
    neutral-drift baseline \(p_C = \frac{1}{2}\). \textbf{(a)} A population that
    mixes Moran (extrinsic) and introspection (intrinsic) players, against the
    number of introspection players: with \(r > N\) the population can exceed the
    baseline, but the number of intrinsic players required grows as the extrinsic
    selection intensity \(\varepsilon\) increases or the choice intensity
    \(\beta\) falls; with \(r < N\) the intrinsic players instead suppress
    cooperation. \textbf{(b)} Introspection dynamics with a heterogeneous return,
    against the number of players receiving the high return \(r_{\mathrm{high}}\);
    cooperation crosses the ceiling once enough players sit above the threshold
    \(r = N\).}
    \label{fig:conclusion-fig}
\end{figure}

\section{Acknowledgements}
Harry Foster's research was supported by EPSRC grant EP/Z535126/1.

\bibliographystyle{plain}
\bibliography{bibliography.bib}
\end{document}